\DeclareMathOperator*{\argmax}{arg\,max}
\DeclareMathOperator*{\err}{err}
\DeclareMathOperator*{\N}{\mathbb{N}}
\DeclareMathOperator*{\R}{\mathbb{R}}
\DeclareMathOperator*{\one}{\mathbb{I}}
\theoremstyle{plain}
\newtheorem{thm}{Theorem}
\newtheorem{lemma}[thm]{Lemma}
\newtheorem{cor}[thm]{Corollary}
\newtheorem{defn}[thm]{Definition}
\newtheorem{claim}[thm]{Claim}
\newtheorem{obs}[thm]{Observation}
\newcommand{\ABCS}{{\sc TargetABCS}}
\newcommand{\SeqThiele}{{\sc TargetSeqThiele}}
\newcommand{\score}{\text{sc}}
\newcommand{\ccrule}{f_{\mathit{CC}}}
\def\DEBUG{true}
\title{\bf The Complexity of Learning Approval-Based Multiwinner Voting Rules\thanks{A preliminary version of the results in this paper appeared in {\em Proceedings of the 36th AAAI Conference on Artificial Intelligence} (AAAI 2022), pages 4235--4932, 2022.}}
\author{Ioannis Caragiannis\thanks{Department of Computer Science, Aarhus University, {\AA}bogade 34, 8200 Aarhus N, Denmark. Email: \{\href{mailto:iannis@cs.au.dk}{iannis}, \href{mailto:karl@cs.au.dk}{karl}\}@cs.au.dk.} \and Karl Fehrs\footnotemark[2]}
\date{}
\begin{document}

\maketitle

\begin{abstract}
We study the {PAC} learnability of multiwinner voting, focusing on the class of approval-based committee scoring (ABCS) rules. These are voting rules applied on profiles with approval ballots, where each voter approves some of the candidates. According to ABCS rules, each committee of $k$ candidates collects from each voter a score, which depends on the size of the voter's ballot and on the size of its intersection with the committee. Then, committees of maximum score are the winning ones. Our goal is to learn a target rule (i.e., to learn the corresponding scoring function) using information about the winning committees of a small number of sampled profiles. Despite the existence of exponentially many outcomes compared to single-winner elections, we show that the sample complexity is still low: a polynomial number of samples carries enough information for learning the target rule with high confidence and accuracy. Unfortunately, even simple tasks that need to be solved for learning from these samples are intractable. We prove that deciding whether there exists some ABCS rule that makes a given committee winning in a given profile is a computationally hard problem. Our results extend to the class of sequential Thiele rules, which have received attention recently due to their simplicity.
\end{abstract}

\section{Introduction}
Voting has been used for centuries to aggregate individual preferences into a common decision. In addition to its traditional use for electing governments or for decision making in management boards, it has also been proved useful in novel applications where individual ratings need to be summarized as collective knowledge. But, is there a general recipe on how preferences should be aggregated? Fortunately, there is no ``golden'' voting rule and this has led social choice theory ---and, in particular, its modern computational branch~\citep{BCEL+16}--- onto exciting research endeavours.

A popular approach has aimed, quite successfully, to evaluate voting rules in terms of desirable axioms they must satisfy. Well-known impossibilities, e.g., see~\citet{A51}, showcase the limitations of this approach. Deviating from this {\em axiomatic} treatment, recent works view voting rules as optimized decision making methods, perhaps tailored to particular applications. In this context, the {\em data-driven design} of voting rules is a very natural approach. The goal is to derive a voting rule from a set of known preferences which are accompanied by favored outcomes under these preferences. The hope then is for the resulting rule to be equally well-suited to more general preferences where the favored outcomes are unknown. The current paper aims to study the potentials and limitations of this approach.

We focus on {\em multiwinner} voting rules~\citep{FSST17a}, which on input the preferences of $n$ voters over $m$ available candidates, return as outcome one or more committees of candidates of fixed size $k$. In particular, we study {\em approval-based} voting~\citep{LS10,LS23}, where the preference of a voter is simply the set of candidates she approves. And, more concretely, we consider the class of {\em approval-based committee scoring} ({\em ABCS}) rules, defined by~\citet{LS21}. An ABCS rule follows a common format. It employs a scoring function, according to which each voter awards a score to each committee of $k$ candidates. This score depends on the ballot size (the number of  candidates the voter approves) and the size of its intersection with the committee. Different scoring functions can be used to define different voting rules.

A natural application area of approval-based voting are rating tasks. For example, consider a website specialized on cultural events that aims to present every week the top-20 performances in the theaters of a city. A simple way to compute this is to ask website visitors for their opinions and aggregate them to a top-20 list. Approval-based multiwinner voting can be used here as follows.
\begin{itemize}
\item Each participant is asked for her favorite performances (i.e., for her approval vote) among the ones available (i.e., among the alternatives).
\item Then, an ABCS rule can be used to compute the top-20 (i.e., the winning committee).
\end{itemize}

Deciding on the best ABCS rule depends on the application at hand. For example, under a rule that favours {\em individual excellence}, each voter assigns to each committee a score that is equal to the number of candidates in the committee the voter approves. Another rule could give just one point to each committee that has a non-empty intersection with the voter's ballot; such a rule would promote {\em representation of voters}. In practice, situations with such a one-dimensional objective for a voting rule are extremely rare. This issue has been extensively studied in the literature, e.g., see the work of \citet{FSST17b,FT18,LS20,JS22}. At the same time, hand-picking an ABCS rule that satisfies multiple objectives (and even accounting for potential trade-offs between objectives) might prove difficult. Instead, it may be easier to derive the characteristics of the desired rule from data when suitable data is available or when it is possible to generate such data. In this paper, we assume the availability of data in the form of preference profiles and corresponding desired winning committees.

Consider the above example of rating cultural events.  A good ABCS rule for picking the top-20 theater performances may aim at ensuring that the choice is representative of the website visitors' common interests while also taking into account fringe works that nevertheless appear highly outstanding to a few of the theater-goers. A data-driven approach to decide such a rule could be implemented by the website operators as follows.
\begin{itemize}
\item For the first ten weeks of operations, the website collects the input from the visitors but uses a small set of (expensive) experts every week to decide an ``ideal'' top-20.
\item Then, these ten pairs of visitor input and expert top-20 constitute a set of profile-winning committee examples.
\item An ABCS rule derived from these data is applied in the subsequent weeks to visitors' input to simulate (in an inexpensive way) the expert opinion.
\end{itemize}
This indicative scenario involves several thousands of voters, more than $100$ alternatives, and winning committees of size $20$. Other rating applications, e.g., for hotels, restaurants, or local businesses, or platforms for evaluating business proposals, investment opportunities, and microloan applications could benefit from a similar approach. Arguably, the best rule for the  application at hand should at least agree with given profile-winning committee data, and, ideally, produce desirable outcomes for unknown preference profiles. Can such a data-driven selection of an ABCS rule be effective?

We explore this question using the {\em PAC} (probably approximately correct) {\em learning} framework. We follow a similar methodological approach with~\citet{PZPR09}, who addressed the same question for single-winner voting rules. In the terminology of PAC learning, we would like to determine the {\em sample complexity} of the class of ABCS rules. How many samples (profiles and corresponding winning committees) are necessary and sufficient so that an ABCS rule that agrees with these data points can be learnt? However, the answer to this question addresses our challenge only partially. Indeed, low sample complexity does not necessarily imply {\em efficient} learning, as the computational problem of finding an ABCS rule that fits the given data can be hard.

\subsection{Our contribution and techniques}\label{sec:our-contribution}
Our first result states that the class of ABCS rules has only polynomial sample complexity (Section~\ref{sec:learnability}). Using a variant of the {\em multiclass fundamental theorem} in PAC learning (Theorem~\ref{thm:sample-vs-graph_Natarajan}), we obtain our sample complexity bounds by proving upper bounds on the {\em graph dimension} and lower bounds on the {\em Natarajan dimension} of the class of ABCS rules. For our upper bound,  we establish a connection between the graph dimension and the number of different sign patterns of a set of linear functions. Then, a result in algebraic combinatorics ---originally proved by~\citet{W68} and later refined by~\citet{A96}--- is used to upper-bound this number of sign patterns and, consequently, the graph dimension and the sample complexity of the class of ABCS rules. 

On the negative side, we give strong evidence that efficient PAC learnability of ABCS rules is not possible. We show that given a profile of approval votes and a committee, deciding whether there is an ABCS rule that makes this committee winning is a $\mathsf{coW}[1]$-hard\footnote{We follow standard notions from parameterized complexity theory, such as $\mathsf{W}$-hierarchy hardness and fixed-parameter tractability; e.g., see~\citet{CFK+15}.} problem, when parameterized by the committee size $k$ (Section~\ref{sec:complexity}). Our proof uses a quite involved reduction from {\sc IndependentSet}, which, on input a graph, defines a profile consisting of several parts and a committee. Some of the parts of the profile guarantee that the only ABCS rule that can make the committee winning has a very particular form: it takes into account only votes with two candidates (ignoring the rest), and mimics the approval-based CC rule (henceforth, simply, the CC rule), a famous rule that is inspired by the work of~\citet{CC83}. Then, the main part of the profile guarantees that the committee is indeed winning under this rule if and only if the graph does not have a large independent set. Our reduction can be modified to give $\mathsf{coW}[1]$-hardness for the following {\em winner verification} problem: given a profile and a committee, is the committee winning under the CC rule? This result strengthens a recent one by~\citet{SDM20}.

We also consider sequential Thiele rules (Section~\ref{sec:seq-Thiele}). These can be thought of as greedy approximations of a subclass of ABCS rules which originate from the work of~\citet{T95}. However, their definition is considerably different from ABCS rules, so that our sample complexity analysis techniques need revision. Still, we are able to show polynomial sample complexity bounds for learning sequential Thiele rules. Interestingly, the problem of deciding whether there is some sequential Thiele rule that makes a given committee winning in a given profile is now fixed-parameter tractable (parameterized by the committee size). Despite this seemingly positive result, we provide evidence that efficient learning is out of reach for sequential Thiele rules as well, by showing NP-hardness. We do so by a novel reduction from a structured version of \textsc{3SAT}, which equates the ordering in which several candidates are greedily included in the winning committee with a boolean assignment to the \textsc{3SAT} variables. As a corollary, our reduction can be modified to yield the first NP-hardness result for the winner verification problem for the sequential CC rule.
\smallskip

\subsection{Related work}
The paper by~\citet{PZPR09} is the most related to ours. Among other results, they prove that the class of single-winner positional scoring rules is efficiently PAC-learnable. We remark that our setting is much more demanding. In particular, the number of possible outcomes is doubly exponential in our case, i.e., $2^{m\choose k}-1$, the number of all possible non-empty sets of winning committees, while it is just $m$ in theirs (where fixed tie-breaking is used to produce a single winning candidate). Hence, even though we have not been able to prove efficiency of learning, the low sample complexity of ABCS rules is rather surprising.

PAC learning in voting has been considered, among other economic paradigms, by~\citet{JZ20} and, in relation to the notion of the distortion, by~\citet{BCH+15}. Actually, the use of sign patterns  has been inspired by the latter paper, even though the particular way in which we employ the result of~\citet{A96} here is different.

More distantly related to our setting, the data-driven approach in the design of voting rules has been followed by a series of papers which focus on particular applications like rating~\citep{CCKV19}, evaluation of online surveys~\citep{BH19}, and peer grading \citep{CKV15,CKV20}. Other foundational work in this direction includes the papers by~\citet{FST18} and~\citet{X13}.

The computational complexity of multiwinner voting rules has received much attention; see the book by~\citet{LS23}. The CC rule has been central in most related studies regarding ABCS rules. \citet{PRZ08} proved that the problem of deciding whether there is a committee that exceeds a given threshold under the CC rule on a given profile is $\mathsf{NP}$-hard. The problem was later proved to be $\mathsf{W}[2]$-hard by~\citet{BSU13}. \citet{SDM20} considered the question of whether a given candidate belongs to a winning committee for a given profile. They also prove that winner verification for the CC rule is $\mathsf{coNP}$-hard, using a reduction from a variant of $3$-{\sc HittingSet}.  We are not aware of published hardness results (of a similar spirit) for sequential Thiele rules.

\subsection{Roadmap}
The rest of the paper is structured as follows. We begin with preliminary definitions and notation in Section~\ref{sec:prelim} and present briefly the necessary background on PAC learning in Section~\ref{sec:pac-background}.  Sections~\ref{sec:learnability}-\ref{sec:seq-Thiele} contain our technical contributions, as outlined in Section~\ref{sec:our-contribution} above. We conclude in Section~\ref{sec:conclusion}, where we highlight our two byproduct results on winner verification; the formal statements of these results and their proofs, which are simplifications of our main hardness proofs, appear in Appendix~\ref{sec:app}.

\section{Preliminaries}\label{sec:prelim}
We consider approval-based voting with a set $\mathcal{N}$ of $n$ {\em voters} (or {\em agents}), each approving a subset from a set $\Sigma$ of $m$ {\em candidates} (or {\em alternatives}). An approval-based multiwinner voting rule is defined for an integer $k$ with $1 < k < m$. It takes as input a profile $P = \{\sigma_i\}_{i \in \mathcal{N}}$, where $\sigma_i\subset \Sigma$ is the non-empty set of alternatives approved by agent $i\in \mathcal{N}$ (or, her {\em approval vote}), and returns one or more $k$-sized subsets of $\Sigma$. We use the term {\em committee} to refer to any $k$-sized set of alternatives; then, the outcome of a multiwinner voting rule is one or more {\em winning} committees.  We are interested in a specific class of multiwinner voting rules called {\em approval-based committee scoring} (ABCS) rules, defined by~\citet{LS21}. These rules are specified by a set of scoring parameters. Using these parameters, an agent's approval vote gives a score to each committee and the winning committees are those that receive the highest total score from all agents.

More formally, an ABCS rule is specified by a bivariate scoring function $f$. The parameter $f(x,y)$ denotes the non-negative score that an approval vote $\sigma$ gives to the committee $C$ when $\sigma$ consists of $y$ alternatives and has $x$ alternatives in common with $C$. Notice that, under this interpretation, the function $f$ needs only be defined over the set of pairs
\begin{align*}
\mathcal{X}_{m,k} &= \left\{(x,y): y\in [m-1], x\in \max\{0,y-m+k\}, ..., \min\{k,y\}\right\}.
\end{align*}
Indeed, an approval vote with $y$ alternatives can intersect with a committee in at least $\max\{0,y-m+k\}$ and at most $\min\{k,y\}$ alternatives. 

Hence, formally $f:\mathcal{X}_{m,k} \rightarrow \R_{\geq 0}$. By definition, $f$ is monotone non-decreasing in its first argument.  To keep the presentation concise, we slightly overload notation and use $f$ to refer both to the scoring function $f$ and the ABCS rule specified by $f$. On input a profile $P=\{\sigma_i\}_{i\in \mathcal{N}}$, the ABCS rule $f$ assigns a score of 
\begin{align*}
\score_f(C,P) &= \sum_{i \in \mathcal{N}} f\left(|C \cap \sigma_i|, |\sigma_i|\right)
\end{align*}
to each committee $C$; then, any committee of maximum score is winning in profile $P$ under rule $f$. We write $f(P)$ for the set of all winning committees in profile $P$. We denote by $\mathcal{F}^{m,k}$ the class of ABCS rules with $m$ alternatives and committee size $k$. We use the term {\em trivial} to refer to the ABCS rule $f$ with $f(x,y)=0$ for every $(x,y)\in \mathcal{X}_{m,k}$; obviously, all committees are winning in any profile under this rule.\footnote{We remark that scoring functions with different parameters may correspond to the very same ABCS rule. Indeed, as \citet{LS21} observe, the ABCS rule $g$ specified by $g(x,y)=c\cdot f(x,y)+d(y)$ for $c>0$ and $d:[m-1]\rightarrow \R_{\geq 0}$ is identical to the ABCS rule $f$, in the sense that, for every profile, they define the same set of winning committees. Throughout the paper, we consider ABCS rules whose scoring function is normalized to have $f(\max\{0,y-m+k\},y)=0$ for every $y\in [m-1]$.  So, the trivial ABCS rule is representative of the rules in which $f(x,y)$ depends only on $y$; all these rules have all committees as winning.}

An important subclass of ABCS rules is that of {\em Thiele} rules. Thiele rules use scoring functions $f$ where the scoring parameter $f(x,y)$ does not depend on $y$. In this case, we can assume that $f$ is univariate, defined over $\{0, 1, ..., k\}$, non-negative, and monotone non-decreasing. A specific Thiele rule that we use extensively is the CC rule that uses $f(0)=0$ and $f(x)=1$ for $x>0$. 

To bypass the necessity of computing the scores of all committees, {\em sequential Thiele rules} have been introduced to approximate ABCS rules by computing a winning committee in a greedy manner. Starting from an empty subcommittee, such rules build a winning committee gradually in $k$ steps; in each step, they include an alternative that increases the score of the current subcommittee the most.  The sequential Thiele rule that uses the univariate scoring function $f$ computes the intermediate score of a set of alternatives $A$ of size {\em up to $k$} on profile $P=\{\sigma_i\}_{i\in \mathcal{N}}$ as  $\score_f(A,P)=\sum_{i\in \mathcal{N}}{f(|A\cap \sigma_i|)}$. Then, given a profile $P$, a committee $C$ is winning under the sequential Thiele rule $f$ in profile $P$ if there is an ordering of the alternatives in $C$, e.g., as $C=\{c_1, c_2, ..., c_k\}$, so that
\begin{align*}
c_i &\in \argmax_{c\in \Sigma\setminus \{c_1, ..., c_{i-1}\}}{\score_f(\{c_1, ..., c_{i-1}\}\cup \{c\},P)},
\end{align*}
for every $i\in [k]$. We denote by $\mathcal{F}^k_{\text{seq}}$ the class of sequential Thiele rules for committee size $k$ and any number of alternatives $m$ higher than $k$. Again, the term {\em trivial} is reserved for the sequential Thiele rule that uses a scoring function $f$ with $f(x)=0$ for every $x$.

We conclude this section by defining the two decision problems we study: \ABCS\ and \SeqThiele. In both, we are given a profile of approval votes $P=\{\sigma_i\}_{i\in \mathcal{N}}$ over the set $\Sigma$ of $m$ alternatives and a $k$-sized subset $C$ of $\Sigma$. Our goal is to decide whether there exists a non-trivial rule $f$ from $\mathcal{F}^{m,k}$ (for \ABCS) or $\mathcal{F}^k_{\text{seq}}$ (for \SeqThiele), so that $C$ is a winning committee in profile $P$ according to $f$. Separate definitions of these problems are given as Definitions~\ref{defn:abcs} and~\ref{defn:seq_thiele} in the sections discussing the complexity of the respective problems (Sections~\ref{sec:complexity} and~\ref{sec:seq-Thiele}).

\section{PAC Learning Background}\label{sec:pac-background}
We follow a standard PAC learning model. In this model, a learning algorithm has to learn a target function from a hypothesis class $\mathcal{H}$ of functions which assign labels from the set $Y$ to the points of a set $Z$. The learning algorithm is given a {\em training set of examples} $T$ consisting of points from the sample space $Z$ paired with the labels from $Y$ that the target function assigns to the respective data point. The data points in the training set are sampled i.i.d.~according to some probability distribution $D$ over $Z$. We consider the {\em realizable case} and assume that there exists a function $h^*\in \mathcal{H}$ that is used to label the examples in the training set as $\{(z,h^*(z))\}_{z\in T}$. The learning algorithm outputs a function $h\in\mathcal{H}$. The error of function $h$ is defined as 
\begin{align*}
\err(h) &= \Pr_{z\sim D}[h(z)\not=h^*(z)].
\end{align*} 
Clearly, $\err(h^*)=0$. The terms ``probably'' and ``approximately correct'' refer to the existence of two parameters $\delta,\epsilon \in (0,1)$, indicating the required {\em confidence} and {\em accuracy} of learning, respectively.

\begin{defn}[PAC learnability]
A hypothesis class $\mathcal{H}$ of functions from set $Z$ to set $Y$ is PAC-learnable if there exist a function $s:(0,1)^2 \rightarrow \N$ ---the sample complexity of $\mathcal{H}$--- and a learning algorithm $\mathcal{A}$ with the following property: For every $\delta,\epsilon\in (0,1)$, every distribution $D$ over $Z$, and every function $h^*$ from $\mathcal{H}$, on input a training set of at least $s(\delta,\epsilon)$ examples generated by $D$ and labelled by $h^*$, the probability (over the choice of the training examples) that algorithm $\mathcal{A}$ returns a hypothesis $h$ of error more than $\epsilon$ is at most $\delta$.
\end{defn}

Extending the relation of the well-known VC dimension with the PAC learnability of boolean functions, \citet{N89} relates the sample complexity of a hypothesis class $\mathcal{H}$ to the notions of graph dimension and Natarajan (or generalized) dimension, both capturing the combinatorial richness of $\mathcal{H}$. To define them, we need to define the notion of {\em shattering} first.

\begin{defn}[shattering]\label{defn:shattering}
Let $\mathcal{H}$ be a class of functions from $Z$ to $Y$ and let $T \subseteq Z$. We say that $\mathcal{H}$ {\em G-shatters} $T$ if there exists a function $g \in \mathcal{H}$ such that
\begin{itemize}
    \item For all $S \subseteq T$, there exists $h_S \in \mathcal{H}$ such that $h_S(z) = g(z)$ for all $z \in S$, and $h_S(z) \neq g(z)$, for all $z \in T \setminus S$.
\end{itemize}
We say that $\mathcal{H}$ {\em N-shatters} $T$ if there exist two functions $g_1,g_2 \in \mathcal{H}$ such that
\begin{enumerate}
	\item For all $z \in T$, $g_1(z) \neq g_2(z)$.
	\item For all $S \subseteq T$, there exists $h_S \in \mathcal{H}$ such that $h_S(z) = g_1(z)$ for all $z \in S$, and $h_S(z) = g_2(z)$, for all $z \in T \setminus S$.
\end{enumerate}
\end{defn}

\begin{defn}[graph and Natarajan dimension]\label{defn:graph_natarajan_dimension}
Let $\mathcal{H}$ be a class of functions from a set $Z$ to a set $Y$. The {\em graph dimension} of $\mathcal{H}$, denoted by $D_G(\mathcal{H})$, is the greatest integer $d$ such that there exists a set of cardinality $d$ that is G-shattered by $\mathcal{H}$. The {\em Natarajan dimension} of $\mathcal{H}$, denoted by $D_N(\mathcal{H})$, is the greatest integer $d$ such that there exists a set of cardinality $d$ that is N-shattered by $\mathcal{H}$.
\end{defn}

We will use the relation of the graph and Natarajan dimension to the sample complexity that is given by the next statement (Theorem~\ref{thm:sample-vs-graph_Natarajan}).  This is a variant of the multiclass fundamental theorem which, in its standard form (e.g., see \citet{SSBD14}), uses only the Natarajan dimension to both upper- and lower-bound the sample complexity.  In the variant we use here, the graph dimension is used instead to upper-bound the sample complexity.\footnote{In the conference version of the paper~\citep{CF22}, we used a formulation of the multiclass fundamental theorem which provides sample complexity upper bounds in terms of the Natarajan dimension and the number of different labels a function from the hypothesis class $\mathcal{H}$ may realize. We prefer to use Theorem~\ref{thm:sample-vs-graph_Natarajan} instead in this revised version, which leads to considerably simpler proofs and better sample complexity bounds in all cases besides the extreme ones where the accuracy parameter $\epsilon$ is exponentially small in terms of $|\mathcal{X}_{m,k}|$ for ABCS rules or in terms of $k$ for sequential Thiele rules.}

\begin{thm}[multiclass fundamental theorem, \citet{DSBDSS11}]\label{thm:sample-vs-graph_Natarajan}
There exist constants $C_1, C_2>0$ such that the hypothesis class $\mathcal{H}$ is PAC-learnable (assuming realizability) with sample complexity $s(\delta,\epsilon)$ that satisfies
\begin{align*}
C_1\cdot \frac{D_N(\mathcal{H})+\ln{(1/\delta)}}{\epsilon}
\leq s(\delta,\epsilon)
\leq C_2\cdot \frac{D_G(\mathcal{H})\cdot \ln(1/\epsilon)+\ln(1/\delta)}{\epsilon}\,.
\end{align*}
\end{thm}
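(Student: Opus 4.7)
The plan is to handle the two inequalities of Theorem~\ref{thm:sample-vs-graph_Natarajan} independently. For the upper bound (graph dimension), I would reduce multiclass realizable PAC learning to binary classification. Fix an unknown realizable target $h^\ast \in \mathcal{H}$ and, for each $h \in \mathcal{H}$, associate the binary disagreement function $L_h \colon Z \to \{0,1\}$ with $L_h(z) = 1$ iff $h(z) \neq h^\ast(z)$, so that $\err(h) = \Pr_{z \sim D}[L_h(z) = 1]$. Letting $\mathcal{L} = \{L_h : h \in \mathcal{H}\}$, the key structural step is to show that $\mathrm{VCdim}(\mathcal{L}) \leq D_G(\mathcal{H})$: if $T$ is shattered by $\mathcal{L}$, then the all-ones pattern is realized by some $h^+ \in \mathcal{H}$ that disagrees with $h^\ast$ on every point of $T$, and shattering gives, for each $S \subseteq T$, a hypothesis $h_S$ disagreeing with $h^\ast$ on exactly $T \setminus S$; this is precisely G-shattering of $T$ by $\mathcal{H}$ with the reference function $g = h^\ast$ of Definition~\ref{defn:shattering}. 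Since any $h$ consistent with the training labels has $L_h \equiv 0$ on the sample, the standard realizable VC bound (Sauer's lemma together with a uniform-convergence argument over an $\epsilon$-cover) yields $s(\delta,\epsilon) = O((D_G(\mathcal{H})\ln(1/\epsilon) + \ln(1/\delta))/\epsilon)$ for any ERM output.

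For the lower bound (Natarajan dimension), let $T = \{z_1,\ldots,z_d\}$ with $d = D_N(\mathcal{H})$ be N-shattered by witnesses $g_1, g_2 \in \mathcal{H}$. Because $\{h_S\}_{S \subseteq T}$ realizes all $2^d$ patterns of the two labels $\{g_1(z_i), g_2(z_i)\}$ on $T$, the restriction of $\mathcal{H}$ to $T$ behaves exactly like a binary class of VC dimension $d$. The plan is to port the classical no-free-lunch argument: concentrate the distribution on $T$, let an adversary pick the target from $\{h_S\}_{S \subseteq T}$ uniformly at random, and argue via a standard information-theoretic/counting bound that any learner observing fewer than $C_1 d/\epsilon$ samples must leave a constant fraction of the coordinates ``unseen'' and therefore incur expected error larger than $\epsilon$, which gives the $D_N(\mathcal{H})/\epsilon$ term by averaging. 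The $\ln(1/\delta)/\epsilon$ term is handled separately via a two-point distribution concentrated on a single N-shattered pair, where a Chernoff-type tail bound forces the $\log(1/\delta)$ dependence.

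The main obstacle I anticipate is the graph-dimension-to-VC-dimension reduction, which is subtler than it looks: it relies crucially on realizability, since it is only this assumption that allows $h^\ast$ itself to serve as the reference function $g$ in the definition of G-shattering. Without realizability, $\mathrm{VCdim}(\mathcal{L})$ need not be controlled by $D_G(\mathcal{H})$, and one instead has to prove the weaker upper bound that appears in the agnostic version of the multiclass fundamental theorem. Once the reduction is in hand, the rest of the argument is a careful but essentially mechanical transfer from well-studied binary PAC bounds, which is why the theorem can be invoked here as a black box.
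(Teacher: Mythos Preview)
The paper does not prove Theorem~\ref{thm:sample-vs-graph_Natarajan}; it is stated with attribution to \citet{DSBDSS11} and used as a black box, so there is no in-paper proof to compare against. Your sketch is a faithful outline of the standard argument behind that cited result: the upper bound via the one-versus-rest reduction (defining $L_h(z)=\one\{h(z)\neq h^\ast(z)\}$ and bounding $\mathrm{VCdim}(\mathcal{L})$ by $D_G(\mathcal{H})$ using $h^\ast$ as the G-shattering witness, which is exactly where realizability enters) together with the realizable VC bound, and the lower bound via a no-free-lunch construction on an N-shattered set, are precisely the ideas in \citet{DSBDSS11} and the earlier work of Natarajan and Ben-David et al. Your identification of the ``obstacle'' is also on target: without realizability the reference function $g$ need not be the data-generating target, and one falls back to the weaker agnostic bounds.
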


\section{The Learnability of ABCS Rules}\label{sec:learnability}
We are ready to prove that the class $\mathcal{F}^{m,k}$ of ABCS rules is PAC-learnable with sample complexity that depends polynomially on the number of alternatives $m$ and the committee size $k$. 

\begin{thm}\label{thm:H-PAC-learnable}
The class $\mathcal{F}^{m,k}$ of ABCS rules with $m$ alternatives and committee size $k$ is PAC-learnable with sample complexity $s(\delta,\epsilon)$ such that
$$s(\delta,\epsilon) \in \Omega\left(\epsilon^{-1}\left(|\mathcal{X}_{m,k}|+\ln{\left(1/\delta\right)}\right)\right),$$
and 
$$s(\delta,\epsilon) \in \mathcal{O}\left(\epsilon^{-1}\left(|\mathcal{X}_{m,k}|\cdot k\cdot \ln{m}\cdot \ln{\left(1/\epsilon\right)}+\ln{\left(1/\delta\right)}\right)\right).$$
\end{thm}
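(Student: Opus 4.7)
The plan is to invoke Theorem~\ref{thm:sample-vs-graph_Natarajan}: the lower bound on $s(\delta,\epsilon)$ will follow from a matching lower bound on the Natarajan dimension $D_N(\mathcal{F}^{m,k}) = \Omega(|\mathcal{X}_{m,k}|)$, and the upper bound from the estimate $D_G(\mathcal{F}^{m,k}) = \mathcal{O}(|\mathcal{X}_{m,k}|\cdot k \cdot \ln m)$. The hypothesis class here consists of functions that map profiles to non-empty sets of $k$-committees, so shattering is understood with respect to this label space.

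For the graph-dimension upper bound, I would exploit the fact that the score $\score_f(C,P)$ is a \emph{linear} function of the scoring parameters $\{f(x,y) : (x,y)\in \mathcal{X}_{m,k}\}$. Fix any set $T = \{P_1,\dots,P_d\}$ of $d$ profiles that is G-shattered by some anchor rule $g$. For each profile $P_j\in T$ and each ordered pair of committees $(C,C')$, the condition $\score_f(C,P_j) - \score_f(C',P_j) > 0$ is the sign of a single linear form in the $|\mathcal{X}_{m,k}|$ parameters; the set of winning committees $f(P_j)$ is fully determined by the sign pattern of these $\binom{m}{k}^2$ forms. So the total number of distinct labelings $(f(P_1),\dots,f(P_d))$ that $\mathcal{F}^{m,k}$ can produce on $T$ is at most the number of sign patterns of $N := d\binom{m}{k}^2$ linear forms in $|\mathcal{X}_{m,k}|$ variables. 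Applying the Warren--Alon bound~\citep{W68,A96}, this is $\mathcal{O}\bigl((N/|\mathcal{X}_{m,k}|)^{|\mathcal{X}_{m,k}|}\bigr)$. G-shattering requires $2^d$ distinct labelings on $T$, so $2^d \le \mathcal{O}\bigl((d\binom{m}{k}^2/|\mathcal{X}_{m,k}|)^{|\mathcal{X}_{m,k}|}\bigr)$, and solving for $d$ yields the claimed $\mathcal{O}(|\mathcal{X}_{m,k}|\cdot k\cdot \ln m)$ bound (the $d$ inside the logarithm is absorbed since $\ln d$ is dominated by the other terms, or handled via a standard fixed-point argument).

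For the Natarajan-dimension lower bound, I would construct explicitly, for every pair $(x,y) \in \mathcal{X}_{m,k}$ (excluding the trivial boundary values), a profile $P_{x,y}$ together with two candidate committees $C_{x,y}^1, C_{x,y}^2$ such that (i) under any ABCS rule $h_S$ in the shattering family, the winning committee on $P_{x,y}$ is either $C_{x,y}^1$ or $C_{x,y}^2$, and (ii) which of the two wins depends \emph{only} on the relative value of the single scoring parameter $f(x,y)$, with all other parameters being irrelevant for this comparison. The construction can place $f(x,y)$ on one side of a threshold to make $C_{x,y}^1$ win, and on the other side to make $C_{x,y}^2$ win, while keeping the parameters fixed for all the other profiles of the shattering set. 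Monotonicity constraints on $f$ need to be respected; this can be accommodated by choosing parameter values on a monotone schedule and making the targeted comparison local to one coordinate via carefully designed voter ballots. With $\Omega(|\mathcal{X}_{m,k}|)$ such profiles, the class is N-shattered by the two anchors $g_1(P_{x,y}) = C_{x,y}^1$ and $g_2(P_{x,y}) = C_{x,y}^2$.

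The main obstacle is the N-shattering construction. Each profile $P_{x,y}$ must be engineered so that the difference of scores $\score_f(C_{x,y}^1,P_{x,y}) - \score_f(C_{x,y}^2,P_{x,y})$ is, up to a constant, equal to $f(x,y)$ minus a threshold --- i.e., no other scoring parameter appears in this difference, despite both committees receiving contributions from many voters. This requires choosing ballots of each target size $y$ that contribute symmetrically to $C_{x,y}^1$ and $C_{x,y}^2$ in all intersection sizes except the single value $x$, and doing this simultaneously for all $(x,y)$ without inadvertently breaking monotonicity of the scoring functions used in the shattering family. Once this gadget is in place, the upper bound via Warren--Alon is comparatively routine, and Theorem~\ref{thm:sample-vs-graph_Natarajan} delivers both sample complexity bounds.
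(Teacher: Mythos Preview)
Your plan matches the paper's proof: invoke Theorem~\ref{thm:sample-vs-graph_Natarajan}, bound $D_G(\mathcal{F}^{m,k})$ from above via the Warren--Alon sign-pattern count on the linear forms $\score_f(C,P_j)-\score_f(C',P_j)$, and bound $D_N(\mathcal{F}^{m,k})$ from below by an explicit shattered family of profiles indexed by (almost all of) $\mathcal{X}_{m,k}$. The upper-bound argument you sketch is essentially identical to the paper's Lemma~\ref{lem:non_sequential_upper_bound_graph_dim}.

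One point of divergence worth flagging in the lower bound: you describe engineering $P_{x,y}$ so that $\score_f(C^1_{x,y},P_{x,y})-\score_f(C^2_{x,y},P_{x,y})$ equals ``$f(x,y)$ minus a threshold'' with \emph{no other scoring parameter} appearing. Taken literally, this makes monotonicity awkward, since the values $f(x,y)$ for fixed $y$ are coupled and cannot be placed independently above or below their thresholds. The paper sidesteps this by making the score difference equal to $1-\bigl(f(x,y)-f(x-1,y)\bigr)$, i.e., it isolates the \emph{increment} rather than the level. The shattering family $h_S$ then sets each increment to $0$ or $2$ independently, which is automatically monotone. The paper also uses the \emph{same} pair of committees $A,C$ for every profile (only the fourth ballot in $P_{x,y}$ varies), and separately verifies that no third committee can win (Lemma~\ref{lem:non_sequential_lower_bound_A_beats_X}); you acknowledge this requirement but should not underestimate it, since it drives the specific choice of $h_S(k,k)=4k-1$.
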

Notice that $|\mathcal{X}_{m,k}|\in \Theta(k(m-k))$; so the sample complexity grows only polynomially in $m$, $k$, and $1/\epsilon$, and logarithmically in $1/\delta$. Our proof of Theorem~\ref{thm:H-PAC-learnable} will follow by Theorem~\ref{thm:sample-vs-graph_Natarajan} after proving an upper bound of $\mathcal{O}\left(|\mathcal{X}_{m,k}|\cdot k\cdot \ln{m}\right)$ on the graph dimension (Lemma~\ref{lem:non_sequential_upper_bound_graph_dim}) and a lower bound of $\Omega\left(|\mathcal{X}_{m,k}|\right)$ on the Natarajan dimension of $\mathcal{F}^{m,k}$ (Lemma~\ref{lem:non_sequential_lower_bound_generalized_dim}).

\subsection{Upper-bounding the graph dimension}
To bound the graph dimension, we will use an important result in algebraic combinatorics that bounds the number of different sign patterns a set of polynomials may have. Consider a set $\mathcal{L}$ of $K$ polynomials $p_1, p_2, ..., p_K$, each defined over the $\ell$ real variables $x_1, x_2, ..., x_{\ell}$ (i.e., $p_i:\R^{\ell}\rightarrow \R$ for $i\in [K]$). A sign pattern $\mathbf{s}$ is just a vector of values in $\{-1,0,+1\}$ with $K$ entries. We say that the set of polynomials $\mathcal{L}$ realizes the sign pattern $\mathbf{s}$ if there exist values $x^*_1, x^*_2, ..., x^*_\ell$ for the variables $x_1, x_2, ..., x_\ell$ such that $\text{sgn}(p_i(x^*_1, x^*_2, ..., x^*_\ell))=\mathbf{s}_i$, for $i=1, 2, ..., K$. Here, $\text{sgn}$ is the signum function returning $-1$, $0$, or $+1$, depending on whether its argument is negative, zero, or positive.

Clearly, the number of different sign patterns $K$ polynomials may realize is at most $3^K$. Usually, this is a very weak upper bound; \citet{A96} provides a much better bound, extending a previous statement due to \citet{W68}.

\begin{thm}[\citet{A96}, \citet{W68}]\label{thm:alon_extension_warren}
The number of different sign patterns a set of $K$ polynomials of degree $\tau$ over $\ell$ real variables may realize is at most $\left(\frac{8e\tau K}{\ell}\right)^\ell$.
\end{thm}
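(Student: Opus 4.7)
The plan is to derive the bound from Warren's classical theorem on \emph{strict} sign patterns via a short perturbation trick due to \citet{A96}, with the hard algebraic-geometric work localized entirely inside Warren's estimate.

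First I would reduce sign patterns over $\{-1, 0, +1\}$ to strict sign patterns over $\{-1, +1\}$. For each polynomial $p_i$ in the collection $\mathcal{L}$, introduce the two shifted polynomials $p_i + \epsilon$ and $p_i - \epsilon$, where $\epsilon > 0$ is chosen small enough to be dominated in absolute value by every nonzero evaluation of every $p_i$ at any point that realizes some sign pattern of $\mathcal{L}$; such an $\epsilon$ exists because there are only finitely many sign patterns to consider and finitely many polynomials. The enlarged collection consists of $2K$ polynomials of degree $\tau$ over $\ell$ real variables. A sign pattern $\mathbf{s} \in \{-1, 0, +1\}^K$ realized by $\mathcal{L}$ at a point $x$ becomes a strict sign pattern of the enlarged collection at the same $x$ via the rule $+1 \mapsto (+, +)$, $-1 \mapsto (-, -)$, $0 \mapsto (+, -)$, where the pair records the signs of $(p_i(x) + \epsilon,\, p_i(x) - \epsilon)$. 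The pattern $(-, +)$ is impossible, so this map is injective from realized sign patterns of $\mathcal{L}$ to realized strict sign patterns of the enlarged collection. It therefore suffices to bound the latter.

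Next I would invoke Warren's bound: the number of strict sign patterns realized by $M$ polynomials of degree $\tau$ in $\ell$ real variables is at most $(4e\tau M/\ell)^\ell$. Taking $M = 2K$ gives the target quantity $\left(\frac{8e\tau K}{\ell}\right)^\ell$. To justify Warren's bound, one observes that each realized strict sign pattern lies on at least one connected component of the open set $\R^\ell \setminus V(P)$, where $V(P)$ is the zero locus of the single product polynomial $P = \prod_{i=1}^{M} p_i$, which has degree at most $M\tau$. The Milnor--Thom theorem (equivalently, the Oleinik--Petrovsky bounds on Betti numbers of real varieties) bounds the number of such connected components by an expression of order $(M\tau)^\ell$, and combining this with the factorial estimate $\ell! \gtrsim (\ell/e)^\ell$ from Stirling's formula delivers the constant $4e$ in Warren's bound.

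The main obstacle is the last link: extracting the precise constant $4e$ rather than a cruder $(CM\tau)^\ell$ from the Milnor--Thom estimate. This typically requires an auxiliary-polynomial trick (e.g.\ intersecting with a large Euclidean ball or adding a small term $\eta \sum_i x_i^{2N}$ to ensure compactness and non-degeneracy at infinity) and a careful binomial-to-exponential estimate. Since the statement is cited in the paper as a black-box from \citet{A96}, I would not re-derive the constant from scratch and would instead treat Warren's theorem as a known input, presenting only the clean perturbation step that turns $K$ into $2K$ and hence $4e$ into $8e$.
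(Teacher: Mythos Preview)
The paper does not prove this theorem at all; it is stated as a cited result from \citet{A96} and \citet{W68} and used as a black box in the proofs of Lemmas~\ref{lem:non_sequential_upper_bound_graph_dim} and~\ref{lem:graph-dim-seq}. There is therefore nothing in the paper to compare your proposal against.

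That said, your sketch is essentially Alon's own argument and is correct in outline. One small slip: you cannot choose $\epsilon$ smaller than \emph{every} nonzero evaluation of every $p_i$ at \emph{every} point realizing some sign pattern, since those evaluations may approach zero. The fix, which you implicitly gesture at, is to first select one witness point per realized sign pattern (finitely many, since there are at most $3^K$ patterns), and then take $\epsilon$ below the minimum nonzero absolute value among the finitely many evaluations $p_i(x)$ at those witnesses. With that adjustment the injection into strict sign patterns of the $2K$ shifted polynomials goes through, and Warren's bound $(4e\tau M/\ell)^\ell$ with $M=2K$ gives the stated constant $8e$.
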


Using Theorem~\ref{thm:alon_extension_warren}, we can prove the next upper bound on the graph dimension $D_G(\mathcal{F}^{m,k})$.

\begin{lemma}\label{lem:non_sequential_upper_bound_graph_dim}
$D_G(\mathcal{F}^{m,k}) \in \mathcal{O}(|\mathcal{X}_{m,k}|k \log{m})$.
\end{lemma}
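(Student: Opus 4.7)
The plan is to reduce the question of bounding $D_G(\mathcal{F}^{m,k})$ to a question about counting sign patterns of linear polynomials, and then invoke Theorem~\ref{thm:alon_extension_warren}.

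Each rule $f\in\mathcal{F}^{m,k}$ is specified by the vector of $\ell := |\mathcal{X}_{m,k}|$ real parameters $\{f(x,y)\}_{(x,y)\in\mathcal{X}_{m,k}}$. For any profile $P=\{\sigma_i\}_{i\in\mathcal{N}}$ and committee $C$, the score $\score_f(C,P)$ is a linear function of these parameters, and hence so is every pairwise difference $\score_f(C,P)-\score_f(C',P)$. Because the winning committee set $f(P)$ is determined by which committees achieve the maximum total score, it is completely determined by the sign pattern (in $\{-1,0,+1\}$) of the $\binom{\binom{m}{k}}{2}$ pairwise difference polynomials associated with that profile.

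Now suppose $T=\{P_1,\ldots,P_d\}$ is G-shattered by $\mathcal{F}^{m,k}$ with witness $g$. Then for each $S\subseteq T$ there is $h_S\in\mathcal{F}^{m,k}$ with $h_S(P_j)=g(P_j)$ iff $P_j\in S$. Distinct subsets $S$ induce distinct agree/disagree patterns relative to the fixed tuple $(g(P_1),\ldots,g(P_d))$, hence distinct tuples $(h_S(P_1),\ldots,h_S(P_d))$, so $\mathcal{F}^{m,k}$ must realize at least $2^d$ distinct label tuples on $T$. By the observation of the previous paragraph, each such tuple is determined by the joint sign pattern of the collection $\mathcal{L}$ of $K = d\binom{\binom{m}{k}}{2}$ linear (degree $\tau=1$) polynomials obtained by collecting all pairwise score-difference polynomials across all $d$ profiles, viewed as polynomials in the $\ell$ scoring parameters. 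Theorem~\ref{thm:alon_extension_warren} thus yields
$$
2^d \;\leq\; \left(\frac{8e\tau K}{\ell}\right)^\ell \;\leq\; \left(\frac{8e\, d\, m^{2k}}{\ell}\right)^\ell,
$$
using $\binom{m}{k}\leq m^k$. Taking base-$2$ logarithms gives $d\leq \ell\bigl(O(1)+\log_2 d+2k\log_2 m\bigr)$, and since $\ell$ and $d$ are polynomially bounded in $m$ and $k$, the $\log_2 d$ term is absorbed into $O(k\log m)$. This produces $d\in\mathcal{O}(|\mathcal{X}_{m,k}|\,k\,\log m)$, as claimed.

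The step that needs the most care is the first: one must confirm that the winning committee set on a given profile is truly a function of the sign pattern of pairwise score differences (handling ties correctly), so that the count of realizable label tuples on $T$ is upper-bounded by the number of realizable sign patterns of $\mathcal{L}$. The normalization convention $f(\max\{0,y-m+k\},y)=0$ only fixes some parameters to zero, reducing the effective dimension $\ell$ and thereby strengthening the bound; the non-negativity and monotonicity constraints on $f$ similarly cannot increase the number of realizable sign patterns. Once these points are in place, the invocation of Theorem~\ref{thm:alon_extension_warren} and the subsequent arithmetic are routine.
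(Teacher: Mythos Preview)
Your approach is essentially the same as the paper's: define the linear score-difference polynomials across all profiles and committee pairs, argue that $2^d$ distinct sign patterns must be realized, and invoke Theorem~\ref{thm:alon_extension_warren}. The paper makes the ``distinct sign patterns'' step explicit by exhibiting, for any $S\not\subseteq S'$, a specific polynomial whose sign differs between $h_S$ and $h_{S'}$, whereas you argue (equivalently and correctly) that the labeling $f\mapsto(f(P_1),\dots,f(P_d))$ factors through the sign pattern.

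There is, however, a genuine gap in your final arithmetic. After taking logarithms you write that ``since $\ell$ and $d$ are polynomially bounded in $m$ and $k$, the $\log_2 d$ term is absorbed into $O(k\log m)$.'' But a polynomial bound on $d$ is precisely what you are trying to establish; invoking it here is circular. From $2^d\le(8e\,d\,m^{2k}/\ell)^\ell$ alone you cannot simply drop the $\log_2 d$ contribution. The paper resolves this cleanly: rewriting the inequality as $\tfrac{\ell}{d}\cdot 2^{d/\ell}\le 8e\,m^{2k}$ and using $2^{z/2}\ge z$ for $z\ge 4$ (after assuming w.l.o.g.\ that $d\ge 4\ell$, the other case being trivial) yields $2^{d/(2\ell)}\le 8e\,m^{2k}$ and hence $d\le 2\ell\log(8e\,m^{2k})$. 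Any standard bootstrapping of an inequality of the form $d\le A\log d+B$ would also work; you just need to actually carry one out.
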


\begin{proof}
Assume that the graph dimension of $\mathcal{F}^{m,k}$ is $N$ with $N\geq 4\cdot |\mathcal{X}_{m,k}|$ since, clearly, the upper bound on $D_G(\mathcal{F}^{m,k})$ holds if $N<4\cdot |\mathcal{X}_{m,k}|$.  According to Definition~\ref{defn:graph_natarajan_dimension}, we thus have a set of $N$ different profiles $\{P_j\}_{j \in [N]}$ and a voting rule $g \in \mathcal{F}^{m,k}$ such that for any subset $S \subseteq [N]$ there exists a rule $h_S \in \mathcal{F}^{m,k}$ with the property that
\begin{itemize}
\item $h_S(P_j) = g(P_j)$ for all $j \in S$, and
\item $h_S(P_j) \neq g(P_j)$ for all $j \in [N]\setminus S$.
\end{itemize}
Let $\mathcal{C}^k$ be the set of all $k$-sized committees. For every pair of committees $C,D \in \mathcal{C}^k$ and every $j \in [N]$, we define
$$L^j_{C, D}(s) = \score_s(C, P_j) - \score_s(D, P_j),$$
where $s$ is any scoring function specifying a voting rule in $\mathcal{F}^{m,k}$. Let $P_j=\{\sigma^j_i\}_{i\in \mathcal{N}}$; then 
\begin{align*}
L^j_{C, D}(s)
= \sum_{i\in \mathcal{N}}s\left(|C\cap \sigma^j_i|, |\sigma^j_i|\right)
- \sum_{i\in \mathcal{N}}s\left(|D\cap \sigma^j_i|, |\sigma^j_i|\right).
\end{align*}
Hence, $L^j_{C,D}(s)$ is a linear function (a polynomial of degree $1$) on the variables $s(x,y)$ for $(x,y)\in \mathcal{X}_{m,k}$. Let $\mathcal{L} = \{L^j_{C,D}(s): j \in [N] \mbox{ and } C,D \in \mathcal{C}^k\}$ be the set of linear functions defined for the $N$ different profiles and all pairs of $k$-sized committees. We note that $|\mathcal{L}| \leq N m^{2k}$ since $|\mathcal{C}^k| = \binom{m}{k} \leq m^k$.

Now, consider two different subsets $S$ and $S'$ of $[N]$ such that $S\not\subseteq S'$ (notice that this is without loss of generality). Let $j^*$ be such that $j^*\in S\setminus S'$. Since $\mathcal{F}^{m,k}$ G-shatters the set of profiles $\{P_j\}_{j \in [N]}$ by our assumption, there exist two rules $h_S, h_{S'} \in \mathcal{F}^{m,k}$ such that $h_S(P_{j^*}) = g(P_{j^*})$ and $h_{S'}(P_{j^*}) \neq g(P_{j^*})$.  Hence,  there must be a pair of committees $C,D \in \mathcal{C}^k$ such that $C \in h_{S}(P_{j^*}), D \in h_{S'}(P_{j^*})$ and either $C \notin h_{S'}(P_{j^*})$ or $D \notin h_{S}(P_{j^*})$ (not necessarily exclusively). Then,
\begin{align*}
\text{sgn}\left(L^{j^*}_{C, D}(h_{S})\right)
= \text{sgn}\left(\score_{h_{S}}(C, P_{j^*}) - \score_{h_{S}}(D, P_{j^*})\right)
= \begin{cases}
  	0, & \text{if } D \in h_{S}(P_{j^*})\\
  	+ 1, & \text{if } D \notin h_{S}(P_{j^*})
  \end{cases}
\end{align*}
and\
\begin{align*}
\text{sgn}\left(L^{j^*}_{C,D}(h_{S'})\right)
= \text{sgn}\left(\score_{h_{S'}}(C,P_{j^*}) - \score_{h_{S'}}(D,P_{j^*})\right)
= \begin{cases}
  	0, & \text{if } C \in h_{S'}(P_{j^*})\\
  	- 1, & \text{if } C \notin h_{S'}(P_{j^*})
  \end{cases}
\end{align*}
Since it is either $C \notin h_{S'}(P_{j^*})$ or $D \notin h_{S}(P_{j^*})$, we get that $\text{sgn}\left(L^{j^*}_{C,D}(h_{S})\right)\not=\text{sgn}\left(L^{j^*}_{C,D}(h_{S'})\right)$. Hence, each of the $2^N$ voting rules $h_S$ for $S\subseteq [N]$ ---corresponding to a distinct assignment of values $s(x,y)$ for $(x,y)\in\mathcal{X}_{m,k}$--- yields a different sign pattern to the set of polynomials $\mathcal{L}$. We now apply Theorem~\ref{thm:alon_extension_warren} to $\mathcal{L}$ for $K = N m^{2k}, \tau = 1, \ell = |\mathcal{X}_{m,k}|$. This gives an upper bound of
$\left(\frac{8e N m^{2k}}{|\mathcal{X}_{m,k}|}\right)^{|\mathcal{X}_{m,k}|}$
on the number of different sign patterns with entries in $\{-1,0,+1\}$ for the set of polynomials $\mathcal{L}$. Hence, 
\begin{align*}
2^{N} &\leq \left(\frac{8e N m^{2k}}{|\mathcal{X}_{m,k}|}\right)^{|\mathcal{X}_{m,k}|}
\end{align*}
and, equivalently,
\begin{align*}
\frac{|\mathcal{X}_{m,k}|}{N} \cdot 2^{N/|\mathcal{X}_{m,k}|}\leq 8e m^{2k}.
\end{align*}
We now apply the property $2^{z/2}\geq z$ for $z\geq 4$.  Using $z=\frac{N}{|\mathcal{X}_{m,k}|}$, combined with our assumption that $N\geq 4|\mathcal{X}_{m,k}|$, we get
\begin{align*}
2^{\frac{1}{2}\cdot N/|\mathcal{X}_{m,k}|}\leq 8e m^{2k}
\end{align*}
and, $N\leq 2 |\mathcal{X}_{m,k}| \log{\left(8e m^{2k}\right)}$, as desired.
\end{proof}

\subsection{Lower bounding the Natarajan dimension}
We now prove a lower bound on $D_N(\mathcal{F}^{m,k})$. In our proof, we construct a large set of profiles that can be $N$-shattered by the set of ABCS rules $\mathcal{F}^{m,k}$. 

\begin{lemma}\label{lem:non_sequential_lower_bound_generalized_dim}
$D_N(\mathcal{F}^{m,k}) \in \Omega(|\mathcal{X}_{m,k}|)$.
\end{lemma}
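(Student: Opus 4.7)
My plan is to construct an explicit set $T$ of $|T|\in\Omega(|\mathcal{X}_{m,k}|)$ profiles that is $N$-shattered by $\mathcal{F}^{m,k}$. Throughout the argument I fix two committees $C=\{c_1,\ldots,c_k\}$ and $D=\{c_2,\ldots,c_{k+1}\}$, which overlap in $k-1$ alternatives. For every pair $(x^*,y^*)$ with $1\le x^*\le k-1$ and $x^*\le y^*\le m-k+x^*-1$ I would add to $T$ a profile $P_{(x^*,y^*)}$ consisting of exactly five voters: two copies of a voter with ballot $C$, two copies of a voter with ballot $D$, and one ``distinguishing'' voter whose ballot $\sigma_{(x^*,y^*)}=\{c_1,c_2,\ldots,c_{x^*}\}\cup\{c_{k+2},\ldots,c_{k+1+y^*-x^*}\}$ has size $y^*$. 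The two pairs contribute equal scores to $C$ and to $D$ under any ABCS rule, while $|\sigma_{(x^*,y^*)}\cap C|=x^*$ and $|\sigma_{(x^*,y^*)}\cap D|=x^*-1$, so the distinguishing voter is the sole source of asymmetry between $C$ and $D$. A direct count of admissible $(x^*,y^*)$ yields $|T|=(k-1)(m-k)\in\Omega(|\mathcal{X}_{m,k}|)$, since $k\ge 2$ and $|\mathcal{X}_{m,k}|\in\Theta(k(m-k))$.

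The witnessing rules are built as a base plus small indicator perturbations. Let $g_2(x,y)=\max\{0,x-(k-1)\}$, which is monotone non-decreasing in $x$, vanishes for $x\le k-1$, and is normalized (indeed $\max\{0,y-m+k\}\le k-1$ for $y\le m-1$). For each admissible $(x^*,y^*)$ introduce the indicator rule $e_{(x^*,y^*)}(x,y)=\mathbb{I}[y=y^*]\cdot\mathbb{I}[x\ge x^*]$, which is also monotone in $x$ and normalized, using $x^*>\max\{0,y^*-m+k\}$ (forced by the upper bound on $y^*$). For a small $\alpha>0$ to be fixed at the end, and for every $S\subseteq T$, define $h_S=g_2+\alpha\sum_{(x^*,y^*)\in S}e_{(x^*,y^*)}$, and set $g_1=h_T$; every $h_S$ is then a legitimate member of $\mathcal{F}^{m,k}$. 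The crucial score-difference identity, obtained by cancelling the contributions of the two pairs of $C$- and $D$-voters, is
\[
  \score_{h_S}(C,P_{(x^*,y^*)})-\score_{h_S}(D,P_{(x^*,y^*)}) = h_S(x^*,y^*)-h_S(x^*-1,y^*) = \alpha\cdot\mathbb{I}[(x^*,y^*)\in S],
\]
because $g_2(x^*,y^*)=g_2(x^*-1,y^*)=0$ (by $x^*\le k-1$) and the only indicator in the sum defining $h_S$ that jumps between first arguments $x^*-1$ and $x^*$ at second argument $y^*$ is $e_{(x^*,y^*)}$ itself. Hence on $P_{(x^*,y^*)}$ the committee $C$ strictly beats $D$ under $h_S$ exactly when $(x^*,y^*)\in S$, and the two are tied exactly when $(x^*,y^*)\notin S$.

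The remaining---and main technical---obstacle is to verify that no other committee $E\neq C,D$ ever beats $C$ or $D$ in any $P_{(x^*,y^*)}$ under any $h_S$, uniformly in $S$. Under $g_2$ the two pairs of voters give $\score_{g_2}(C)=\score_{g_2}(D)=2$, while for any $E\neq C,D$ the only possible contribution is from the distinguishing voter and is at most $1$, leaving a gap of at least $1$ in favor of $\{C,D\}$. Because targets in $T$ have $x^*\in\{1,\ldots,k-1\}$, at most $k-1$ indicators $e_{(\cdot,y')}$ appear at any fixed ordinate $y'$, so $|h_S(x,y)-g_2(x,y)|\le\alpha(k-1)$ pointwise; summed across the five voters of any $P_{(x^*,y^*)}$, the $h_S$-score of any committee deviates from its $g_2$-score by at most $5\alpha(k-1)$, and the gap between $\{C,D\}$ and the rest shrinks by at most $10\alpha(k-1)$. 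Taking $\alpha=1/(20(k-1))$ keeps this loss strictly below $1$, preserving the gap for every $S$ simultaneously. Combined with the identity above, this yields $h_S(P_{(x^*,y^*)})=\{C\}$ when $(x^*,y^*)\in S$ and $h_S(P_{(x^*,y^*)})=\{C,D\}$ otherwise. In particular $g_1$ and $g_2$ differ on every profile of $T$, and for each $S\subseteq T$ the rule $h_S$ matches $g_1$ on $S$ and $g_2$ on $T\setminus S$, which is exactly the condition of Definition~\ref{defn:shattering}; this establishes $D_N(\mathcal{F}^{m,k})\ge(k-1)(m-k)\in\Omega(|\mathcal{X}_{m,k}|)$.
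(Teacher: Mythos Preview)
Your argument is correct, and it follows the same essential strategy as the paper: index a family of profiles by pairs $(x,y)$, arrange that on profile $P_{(x,y)}$ the score difference between two focal committees overlapping in $k-1$ alternatives reduces to the single jump $f(x,y)-f(x-1,y)$, and build the shattering family $\{h_S\}$ by toggling which jumps are positive. The execution differs in a few ways worth noting. The paper uses four-voter profiles and explicit integer-valued rules (with $h_S(x,y)-h_S(x-1,y)\in\{0,2\}$), and arranges that \emph{each} of the two focal committees is the unique winner in one of the two cases; your five-voter profiles with the ``base $g_2$ plus small-$\alpha$ indicator perturbations'' instead yield $\{C\}$ versus the tie $\{C,D\}$. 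Your perturbation framing makes the ``no third committee wins'' step pleasantly uniform (a single $5\alpha(k-1)$ deviation bound replaces the paper's case analysis in its Lemma~\ref{lem:non_sequential_lower_bound_A_beats_X}), at the cost of introducing and tuning $\alpha$. Both routes deliver a shattered set of size $\Theta((k-1)(m-k))=\Omega(|\mathcal{X}_{m,k}|)$.
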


\begin{proof}
For a given $m\geq 3$ and $k$ such that $2\leq k \leq m-1$, consider the set of alternatives
$$\Sigma = \{a, b_1, \ldots, b_{k-1},c,d_1,\ldots, d_{m-k-1}\}.$$
Our goal is to define a set of profiles, where for each profile we are able to pick rules from $\mathcal{F}^{m,k}$ such that either committee $A = \{a, b_1, \ldots, b_{k-1}\}$ or committee $C = \{b_1, \ldots, b_{k-1}, c\}$ is the single winning committee under the respective rule.

Let $\mathcal{T}_{m,k}$ be the following set of pairs:
\begin{align*}
\mathcal{T}_{m,k} &= \Big\{(x,y) : y \in \{2,\ldots, m-1\},
x \in \big\{1+\max\{0, y-m+k\}, \ldots, \min\{k,y\}\big\}\Big\}
\setminus \{(k,k)\}.
\end{align*}
Even though some of the pairs of set $\mathcal{X}_{m,k}$ have been omitted from $\mathcal{T}_{m,k}$, they have asymptotically the same size as the next lemma indicates.

\begin{lemma}\label{lem:T-vs-X}
$|\mathcal{T}_{m,k}| \in \Omega(|\mathcal{X}_{m,k}|)$.
\end{lemma}

\begin{proof}
The lemma clearly holds for $m<4$ since both $\mathcal{X}_{m,k}$ and $\mathcal{T}_{m,k}$ have constant size in this case. In the following, we assume $m\geq 4$. Observe that the $(x,y)$ pairs of $\mathcal{X}_{m,k}$ that are missing from $\mathcal{T}_{m,k}$ are the following: $(\max\{0,y-m+k\},y)$ for $y=2, ..., m-1$, $(0,1)$, $(1,1)$, and $(k,k)$. Hence, 
\begin{align}\label{eq:T-vs-X}
|\mathcal{T}_{m,k}| &= |\mathcal{X}_{m,k}|-m-1
\end{align}
Also, observe that 
\begin{align}\label{eq:X_mk-vs-X_m1}
|\mathcal{X}_{m,k}|\geq |\mathcal{X}_{m,2}| &=3m-5\geq \frac{7}{5}(m+1).
\end{align}
The second inequality follows since $m\geq 4$. Now, using equations (\ref{eq:T-vs-X}) and (\ref{eq:X_mk-vs-X_m1}), we get
\begin{align*}
|\mathcal{T}_{m,k}| &\geq |\mathcal{X}_{m,k}|-\frac{5}{7}|\mathcal{X}_{m,k}|=\frac{2}{7}|\mathcal{X}_{m,k}|,
\end{align*}
as desired.
\end{proof}

We now define the set of profiles $\{P_{xy}\}_{(x,y) \in \mathcal{T}_{m,k}}$. Each profile $P_{xy}$ contains four approval votes:
$$\sigma^{xy}_1 = \{a\}, \sigma^{xy}_2 = A, \sigma^{xy}_3 = C,$$
and
$$\sigma^{xy}_4 = \{b_1, \ldots, b_{x-1}, c, d_1, \ldots, d_{y-x}\}.$$
We introduce the family of rules $F \subseteq \mathcal{F}^{m,k}$ which, for every subset $S\subseteq \mathcal{T}_{m,k}$, contains the voting rule $h_S$ defined as:
\begin{align*}
h_S(1,1) &= 1,\\
h_S(k,k) &= 4k-1,\\
h_S(\max\{0,y-m+k\},y) &= 0, \quad\,\,\, \text{for } y \in [m-1],\\
h_S(x,y) - h_S(x-1,y) &= 
\begin{cases}
0, & (x,y)\in S,\\
2, & (x,y)\in \mathcal{T}_{m,k} \setminus S.
\end{cases} 
\end{align*}
Notice that the function $h_S$ is monotonically non-decreasing in its first argument, as required by the definition of voting rules in $\mathcal{F}^{m,k}$.

We will show that the family $F$ $N$-shatters the set of profiles $\{P_{xy}\}_{(x,y) \in \mathcal{T}_{m,k}}$. To do so, we will make use of the following two lemmas. Lemma~\ref{lem:non_sequential_lower_bound_A_beats_X} guarantees that no other committee besides $A$ and $C$ is ever winning in any profile of $\{P_{xy}\}_{(x,y) \in \mathcal{T}_{m,k}}$. Lemma~\ref{lem:non_sequential_lower_bound_sc_diff_A_C} identifies the winning committee among $A$ and $C$ in each of these profiles for every voting rule in set $F$.
 
\begin{lemma}\label{lem:non_sequential_lower_bound_A_beats_X}
For every committee $X \neq A,C$, every profile $P_{xy} \in \{P_{xy}\}_{(x,y) \in \mathcal{T}_{m,k}}$ and any voting rule $h \in F$, it holds that $\score_h(A, P_{xy}) - \score_h(X, P_{xy}) \geq 1$.
\end{lemma}

\begin{proof}
Let us first assume that at least one of the following two assumptions hold: either $y\not=k$ or $|X\cap \sigma_4^{xy}|<k$. Notice that $h(|X\cap \sigma_4^{xy}|,y)\leq 2k$ in this case, due to the definition of the family $F$. Then, since $X$ is different than both $A$ and $C$, we have that
\begin{align*}
\score_h(X,P_{xy}) &= h(|X\cap\{a\}|,1)+h(|X\cap A|,k)+h(|X\cap C|,k)+h(|X\cap \sigma_4^{xy}|,y)\\
&\leq h(1,1)+2 h(k-1,k)+2k,
\end{align*}
while
\begin{align*}
\score_h(A,P_{xy})=h(1,1)+h(k,k)+h(k-1,k)+h(x-1,y).
\end{align*}
Hence, 
\begin{align*}
\score_h(A,P_{xy}) - \score_h(X,P_{xy}) &\geq h(k,k)+h(x-1,y)-h(k-1,k)-2k \geq 1
\end{align*}
since $h(x-1,y)\geq 0$, $h(k,k)=4k-1$, and $h(k-1,k) \leq 2k-2$.

Now, assume that $y=k$ and $|X\cap \sigma_4^{xy}|=k$; then $X=\sigma_4^{xy}$. We have
\begin{align*}
\score_h(X,P_{xy}) &= h(0,1)+h(x-1,k)+h(x,k)+h(k,k)\\&\leq h(0,1)+h(x-1,k)+h(k-1,k)+h(k,k).
\end{align*}
The inequality follows since $(k,k)\not\in \mathcal{T}_{m,k}$ and, thus, $x\leq k-1$, and by the monotonicity of voting rule $h$. Hence, 
\begin{align*}
\score_h(A,P_{xy}) - \score_h(X,P_{xy}) &\geq h(1,1)-h(0,1) = 1. \qedhere
\end{align*}
\end{proof}

\begin{lemma}\label{lem:non_sequential_lower_bound_sc_diff_A_C}
Let $S\subseteq \mathcal{T}_{m,k}$. For every profile $P_{xy} \in \{P_{xy}\}_{(x,y) \in \mathcal{T}_{m,k}}$, it holds that
\begin{align*}
\score_{h_S}(A, P_{xy}) - \score_{h_S}(C, P_{xy}) &=
\begin{cases}
1, & (x,y) \in S\\
-1, & (x,y) \in \mathcal{T}_{m,k}\setminus S
\end{cases}
\end{align*}
\end{lemma}

\begin{proof}
By the definition of profile $P_{xy}$, we have
\begin{align*}
\score_{h_S}(A,P_{xy}) &= h_S(1,1) + h_S(k,k) + h_S(k-1,k) + h_S(x-1,y),
\end{align*}
and
\begin{align*}
\score_{h_S}(C, P_{xy})
&= h_S(0,1)+h_S(k-1,k) + h_S(k,k) + h_S(x,y).
\end{align*}
Recall that $h_S(0,1)=0$ and $h_S(1,1)=1$. Thereby, 
\begin{align*}
\score_{h_S}(A,P_{xy}) - \score_{h_S}(C, P_{xy})
&= 1 - \left(h_S(x,y) - h_S(x-1,y)\right),
\end{align*}
and the lemma follows from the definition of voting rule $h_S$.
\end{proof}

Together, Lemmas~\ref{lem:non_sequential_lower_bound_A_beats_X} and~\ref{lem:non_sequential_lower_bound_sc_diff_A_C} imply that, when applied to profile $P_{xy}$, the voting rule $h_S$ returns
\begin{itemize}
\item $A$ as the unique winning committee if $(x,y)\in S$, and
\item $C$ as the unique winning committee if $(x,y)\in \mathcal{T}_{m,k}\setminus S$.
\end{itemize}
By Definition~\ref{defn:shattering}, this implies that the family $F$ (and, consequently, the family $\mathcal{F}^{m,k}$) $N$-shatters the set of profiles $\{P_{xy}\}_{(x,y)\in \mathcal{T}_{m,k}}$. Indeed, it suffices to define functions $g_1$ and $g_2$ as $g_1=h_{\mathcal{T}_{m,k}}$ and $g_2=h_{\emptyset}$, while the set of profiles $\{P_{xy}\}_{(x,y)\in \mathcal{T}_{m,k}}$ plays the role of set $T$ in Definition~\ref{defn:shattering}. By Definition~\ref{defn:graph_natarajan_dimension}, we conclude that the Natarajan dimension of $\mathcal{F}^{m,k}$ is at least $|\mathcal{T}_{m,k}|$. Lemma~\ref{lem:non_sequential_lower_bound_generalized_dim} now follows from Lemma~\ref{lem:T-vs-X}.
\end{proof}

\section{The Complexity of \ABCS}\label{sec:complexity}
Unfortunately, despite the low sample complexity of the class of ABCS rules, learning from samples is notoriously hard. We prove this for \ABCS, which captures the elementary task of learning from a single sample. We repeat the formal definition of \ABCS\ here.
\begin{defn}[\ABCS]\label{defn:abcs}
Given a profile of approval votes $P=\{\sigma_i\}_{i\in \mathcal{N}}$ over the set $\Sigma$ of $m$ alternatives and a $k$-sized subset $C$ of $\Sigma$, decide whether there exists a non-trivial rule $f$ from $\mathcal{F}^{m,k}$, so that $C$ is a winning committee in profile $P$ according to $f$.
\end{defn}
The next statement uses a polynomial-time reduction from (the complement of) the \textsc{IndependentSet} problem.
\begin{defn}[\textsc{IndependentSet}]\label{defn:independent_set}
Given a graph $G$ and a positive integer $K$, decide whether $G$ contains a set of at least $K$ nodes that form an independent set.
\end{defn}
\textsc{IndependentSet} is known to be $\mathsf{W}[1]$-hard, parameterized by the independent set size~\citep[Theorem 13.18]{CFK+15}.

\begin{thm}\label{thm:ABCS_coW1_hard}
\ABCS\ parameterized by the committee size $k$ is $\mathsf{coW}[1]$-hard.
\end{thm}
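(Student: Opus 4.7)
The plan is to exhibit a parameterized reduction from \textsc{IndependentSet} to the complement of \ABCS. Given $(G,K)$ with $G=(V,E)$, I would construct a profile $P$ over a candidate set $\Sigma$ containing $V$ together with a family of auxiliary (dummy) candidates, a committee $C \subseteq \Sigma$, and a committee size $k$ depending on $K$, such that $G$ has an independent set of size at least $K$ if and only if no non-trivial ABCS rule $f \in \mathcal{F}^{m,k}$ makes $C$ winning in $P$. Equivalently, $C$ is made winning by some non-trivial rule exactly when $G$ has no independent set of size $K$; since \textsc{IndependentSet} is $\mathsf{W}[1]$-hard in $K$, this yields $\mathsf{coW}[1]$-hardness of \ABCS\ in $k$, provided $k$ is bounded by a function of $K$.

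The profile $P$ will be built from two conceptually distinct components, which I call a \emph{shaping} part and a \emph{main} part. The shaping part aims to constrain any candidate rule $f$ that could plausibly make $C$ winning: for every ballot size $y \neq 2$, it should force $f(x,y)$ to be constant in $x$ (so votes of that size contribute the same to every $k$-committee and are irrelevant), while for $y=2$ it should force $f(0,2)=0$ and $f(1,2)=f(2,2)>0$. Taken together, this pins $f$ down (up to positive scaling) to the CC rule evaluated on the size-2 subprofile. Concretely, the shaping part will consist of collections of identical or symmetric votes together with ``witness'' committees that tie or beat $C$ unless the corresponding values of $f$ satisfy the desired equalities; combining many such local constraints forces the full shape of $f$.

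The main part then encodes $G$: for every edge $\{u,v\} \in E$ it contributes the size-2 vote $\{u,v\}$, together with a small number of additional size-2 votes involving the dummy candidates used to define $C$. Under the CC-like rule carved out by the shaping part, the score of a committee $S$ is (up to an additive constant coming from the shaping part) the number of size-2 votes intersected by $S$. The candidates in $C$ and the size of $k$ are chosen so that a committee $S \neq C$ strictly outscores $C$ exactly when $S$ is a vertex cover of $G$ of a prescribed size, which by complementation corresponds to an independent set of size $K$ in $G$. In particular, $C$ is calibrated to miss exactly one (or a fixed small number of) edge votes, placing it just below the score achievable by an ``independent-set committee''.

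The hard part will be the design and verification of the shaping gadgets. For $\mathsf{coW}[1]$-hardness it is essential that a single construction simultaneously (i) forces $f(\cdot,y)$ to be constant for every $y\neq 2$, (ii) forces the CC-like behavior at $y=2$, and (iii) keeps the committee size $k$ bounded by a function of $K$ while allowing the candidate set $\Sigma$ to grow polynomially with $|V|$. Most of the technical work is in finding gadget profiles whose witness committees expose each unwanted freedom in $f$ as a tie with or loss to another committee, so that the only non-trivial rules surviving all shaping constraints are the positive scalings of the CC rule on size-2 votes; once that is achieved, correctness of the reduction follows from the direct combinatorial correspondence between $k$-committees of maximum CC score on the edge votes and vertex covers of $G$.
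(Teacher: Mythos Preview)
Your high-level plan matches the paper's: reduce from \textsc{IndependentSet}, use a shaping subprofile to force any winning rule to behave like CC on size-$2$ ballots, and encode the graph by size-$2$ edge votes. The paper's shaping part does exactly what you describe (one witness committee $C'=\{a_1,\dots,a_{k-1},d\}$ together with a telescope of votes indexed by $(x,y)$ forces $f(x,y)=0$ for $y\neq 2$, and a separate block forces $f(1,2)=f(2,2)$).

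The gap is in your main part. Under CC on the edge votes, the score of a set $S$ of vertices is the number of edges with at least one endpoint in $S$; this is maximized by \emph{vertex covers}, and a vertex cover of the ``prescribed size'' $|V|-K$ cannot give a committee size bounded by a function of $K$, so your route through vertex cover breaks the parameterized reduction. Your alternative phrasing (an ``independent-set committee'' just above $C$) also fails as stated, because without additional structure the CC score does not peak on independent sets. The paper's missing idea is a \emph{degree-regularization} preprocessing: pad $G$ with pendant vertices so that every original vertex has degree exactly $\Delta$. Then for any $k$-set $S$ of degree-$\Delta$ vertices, the number of edges incident to $S$ equals $k\Delta$ minus the number of edges inside $S$, so $S$ attains the maximum $k\Delta$ iff it is independent in $G$. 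Crucially, the target committee $A$ in the paper is \emph{not} made of vertex candidates at all; it is a disjoint set $\{a_1,\dots,a_k\}$ whose score is calibrated (via the second subprofile) to sit exactly at the $k\Delta-1$ threshold, so that it is beaten precisely when some $k$-set of $b$-vertices reaches $k\Delta$, i.e., when $G$ has an independent set of size $k=K$. You should incorporate both the regularization trick and this separation of the target committee from the vertex candidates; without them the encoding does not go through with $k$ bounded in $K$.
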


\begin{proof}
For a given instance of \textsc{IndependentSet} consisting of a graph $G$ and an integer $K$, we construct an instance of \ABCS\ with $k=K$ such that there is a non-trivial rule $f\in \mathcal{F}^{m,k}$ that outputs $A$ as a winning committee in $P$ if and only if $G$ contains no independent set of size $K$.

Let $\Delta$ denote the maximum degree among the vertices of $G$. We can assume that $\Delta\geq 2$, since \textsc{IndependentSet} would be trivially solvable in polynomial time otherwise. As a first step in our construction, we modify $G$ to another graph $G'$ as follows. For every vertex $v\in V$, we add $\Delta-\deg(v)$ dummy vertices that are adjacent only to $v$. Let $G'=(V',E')$ be the resulting graph and let $|V'|=r$. We note that $G'$ may contain more independent sets of size $k$ than $G$. However, observe that each additional $k$-sized independent set in $G'$ contains at least one dummy vertex of degree 1, whereas the vertices of $G$ (including those vertices belonging to any independent set in $G$) have degree $\Delta\geq 2$ in $G'$. As we will see in the later parts of our proof, the construction of our \ABCS\ instance is not affected by the existence of additional $k$-sized independent sets in $G'$ due to this difference in vertex degree.

Without loss of generality, we can assume that $V'$ is the set of positive integers in $[r]$. The set of alternatives $\Sigma$ consists of alternatives $a_i$ and $b_i$ for every vertex $i\in V'$, and the additional alternatives $c$ and $d$. Let $A=\{a_1, a_2, ..., a_k\}$. The profile $P$ consists of three parts: 
\begin{itemize}
\item Part 1 consists of a vote $\{b_i, b_j\}$ for every edge $(i,j)\in E'$. 
\item Part 2 consists of $k\Delta-1$ copies of each of the following votes: vote $\{a_i,b_j\}$ for every $i,j\in [r]$, votes $\{a_i,c\}$ and $\{b_i,d\}$ for every $i\in [r]$, vote $\{a_1, d\}$, and vote $\{c,d\}$. 
\item Part 3 consists of a vote containing alternative $d$, alternatives $a_1$, $a_2$, ..., $a_{x-1}$, and $y-x$ additional alternatives among alternatives $a_{k+1}$, $a_{k+2}$, ..., $a_r$, $b_1$, ..., $b_r$, for every $(x,y)\in \widehat{\mathcal{X}}=\mathcal{X}_{m,k} \setminus \left(\{(\max\{0,y-m+k\},y): y\in [m-1]\} \cup \{(1,2), (2,2)\}\right)$.
\end{itemize}
We use $P_1$, $P_2$, and $P_3$ to denote the three subprofiles of votes in part 1, 2, and 3, respectively.

Parts 2 and 3 of profile $P$ have important properties that are given in Lemmas~\ref{lem:C_almost_beats_A_in_P3} and~\ref{lem:C_almost_beats_A_in_P2}.

\begin{lemma}\label{lem:C_almost_beats_A_in_P3}
Let $f\in \mathcal{F}^{m,k}$ and $C=\{a_1, ..., a_{k-1},d\}$. Then, $\score_f(A,P_3)=\score_f(C,P_3)$ if $f(x,y)=0$ for every $(x,y)\in \mathcal{X}_{m,k}\setminus \{(1,2), (2,2)\}$ and $\score_f(A,P_3)<\score_f(C,P_3)$, otherwise.
\end{lemma}

\begin{proof}
Committees $A$ and $C$ get scores of $f(x-1,y)$ and $f(x,y)$ from the vote of subprofile $P_3$ corresponding to the pair $(x,y)\in \widehat{\mathcal{X}}$. Hence,
\begin{align*}
\score_f(A,P_3)- \score_f(C,P_3) &= \sum_{(x,y)\in \widehat{\mathcal{X}}}{\left(f(x-1,y)-f(x,y)\right)}\\
&=-\sum_{y\in [m-1]\setminus\{2\}}{\sum_{x=1+\max\{0,y-m+k\}}^{\min\{k,y\}}{\left(f(x,y)-f(x-1,y)\right)}}\\
&=-\sum_{y\in [m-1]\setminus\{2\}}{f(\min\{k,y\},y)}.
\end{align*}
Thus, the difference $\score_f(A,P_3)- \score_f(C,P_3)$ is equal to zero if $f(x,y)=0$ for every $(x,y)\in \mathcal{X}_{m,k}\setminus \{(1,2), (2,2)\}$ and negative otherwise. \end{proof}

\begin{lemma}\label{lem:C_almost_beats_A_in_P2}
Let $f\in \mathcal{F}^{m,k}$ and $C=\{a_1, ..., a_{k-1},d\}$. Then, $\score_f(A,P_2)=\score_f(C,P_2)$ if $f(1,2)=f(2,2)$ and $\score_f(A,P_2)<\score_f(C,P_2)$, otherwise.
\end{lemma}

\begin{proof}
In the subprofile $P_2$, committee $A$ gets score $f(1,2)$ from each of the $k\Delta-1$ copies of vote $\{a_i,b_j\}$ for $i\in [k]$ and $j\in [r]$, from each of the $k\Delta-1$ copies of vote $\{a_i,c\}$ for $i\in [k]$ and from each of the $k\Delta-1$ copies of $\{a_1,d\}$, i.e., $(k\Delta-1)(kr+k+1)f(1,2)$ in total. Committee $C$ gets score $f(1,2)$ from each of the $k\Delta-1$ copies of vote $\{a_i,b_j\}$ for $i\in [k-1]$ and $j\in [r]$, from each of the $k\Delta-1$ copies of vote $\{a_i,c\}$ for $i\in [k-1]$, from each of the $k\Delta-1$ copies of $\{b_i,d\}$ for $i\in [r]$, and from each of the $k\Delta-1$ copies of vote $\{c,d\}$, and $f(2,2)$ from each of the $k\Delta-1$ copies of vote $\{a_1,d\}$, i.e., $(k\Delta-1)(kr+k)f(1,2)+(k\Delta-1)f(2,2)$. Hence,
\begin{align*}
\score_f(A,P_2)-\score_f(A,P_2) &= (k\Delta-1)(f(1,2)-f(2,2)),
\end{align*}
which yields the lemma.
\end{proof}

As no vote in part 1 of profile $P$ includes any alternatives in committees $A$ and $C$, Lemmas~\ref{lem:C_almost_beats_A_in_P3} and~\ref{lem:C_almost_beats_A_in_P2} imply that a non-trivial rule $f\in \mathcal{F}^{m,k}$ can make committee $A$ winning in $P$ only if 
it satisfies $f(1,2)=f(2,2)>0$ and $f(x,y)=0$ for any pair $(x,y)$ of $\mathcal{X}_{m,k}$ different than $(1,2)$ and $(2,2)$. We complete the proof assuming ---without loss of generality--- that $f$ furthermore satisfies $f(1,2)=f(2,2)=1$.

\begin{claim}\label{claim:score-of-A}
It holds that $\score_f(A,P)=(k\Delta-1)(kr+k+1)$.
\end{claim}

\begin{proof}
Committee $A$ gets one point from the $k\Delta-1$ copies of vote $\{a_i,b_j\}$ for $i\in [k]$ and $j\in [r]$, the $k\Delta-1$ copies of vote $\{a_i,c\}$ for $i\in [k]$, and the $k\Delta-1$ copies of vote $\{a_1,d\}$.
\end{proof}

Consider a committee $B$ and let $t$ be the number of its alternatives from $\{b_1, ..., b_r\}$, and $\lambda$, $\mu$, and $\nu$ be binary variables indicating whether alternative $c$, $d$, and $a_1$ belongs to $B$, respectively.

\begin{lemma}\label{lem:score-of-B}
If $t<k$, then $\score_f(B,P) \leq (k\Delta-1)(kr+k+1)$.
\end{lemma}

\begin{proof}
We will show that 
\begin{align}\label{eq:score-of-B}
\score_f(B,P) & \leq (k\Delta-1)(\one\{t>0\}+kr+k+\mu+(1-\mu)\nu-(t+\lambda)(k-t-\lambda)).
\end{align}
Inequality (\ref{eq:score-of-B}) yields the claim by distinguishing between three cases:
\begin{itemize}
\item If $t+\lambda=0$, then the terms $\one\{t>0\}$ and $(t+\lambda)(k-t-\lambda)$ are equal to $0$ and $\mu+\nu-\mu\nu\leq 1$.
\item If $k-t-\lambda=0$, then $B$ contains only alternatives from $\{b_1, ..., b_r\}$ and alternative $c$. Hence, $\mu$, $\nu$, and term $(t+\lambda)(k-t-\lambda)$ are all equal to $0$.
\item If $t+\lambda>0$ and $k-t-\lambda>0$, then $(t+\lambda)(k-t-\lambda)\geq 1$, while $\mu+\nu-\mu\nu\leq 1$.
\end{itemize}
Due to (\ref{eq:score-of-B}), all these cases yield the desired inequality for $\score_f(B,P)$.

To prove (\ref{eq:score-of-B}), first observe that $B$ gets at most $\Delta$ points for each of the $t$ alternatives from set $\{b_1, ..., b_r\}$ that $B$ contains. Hence, by the assumption that $t<k$, we obtain that
\begin{align}\label{eq:score-of-B-in-P3}
\score_f(B,P_1) &\leq t\Delta \leq (k\Delta-1)\one\{t>0\}.
\end{align}
To bound $\score_f(B,P_2)$, observe that $B$ gets a point for each of the $k\Delta-1$ copies of
\begin{itemize}
\item vote $\{a_i,b_j\}$ for $i,j\in [r]$ with either $a_i\in B$ or (not exclusively) $b_j\in B$. The total number of these votes is $(k\Delta-1)(tr+(k-t-\lambda-\mu)r-t(k-t-\lambda-\mu))$. 
\item vote $\{a_i,c\}$ for $i\in [r]$ with $c\in B$ or $a_i\in B$. The total number of these votes is $(k\Delta-1)(k-t-\lambda-\mu+r\lambda-\lambda(k-t-\lambda-\mu))$.
\item vote $\{b_i,d\}$ for $i\in [r]$ with $d\in B$ or $b_i\in B$. The total number of these votes is $(k\Delta-1)(r\mu+t-\mu t)$.
\item vote $\{a_1,d\}$ if at least one of $a_1$ and $d$ belongs to $B$. The number of these votes is $(k\Delta-1)(\mu+\nu-\mu\nu)$.
\item vote $\{c,d\}$ if at least one of $c$ and $d$ belongs to $B$. The number of these votes is $(k\Delta-1)(\lambda+\mu-\lambda\mu)$.
\end{itemize}
Summing these numbers of votes, we get
\begin{align}\label{eq:score-of-B-in-P2}
\score_f(B,P_2) &= (k\Delta-1)(kr+k+\mu+\nu-\mu\nu-(t+\lambda)(k-t-\lambda)).
\end{align}
Inequality (\ref{eq:score-of-B}) then follows by summing equations (\ref{eq:score-of-B-in-P3}) and (\ref{eq:score-of-B-in-P2}) and since $\score_f(B,P_3)=0$.
\end{proof}

By Claim~\ref{claim:score-of-A} and Lemma~\ref{lem:score-of-B}, if committee $B$ has score higher than $\score_f(A,P)$, then it must be $t=k$. We conclude the proof by reasoning about $\score_f(B,P)$ in this case.

\begin{claim}\label{claim:score-of-B-in-P2}
Let $B$ be a committee with $t=k$. Then, $\score_f(B,P_2)=(k\Delta-1)(kr+k)$.
\end{claim}

\begin{proof}
In part 2 of the profile, committee $B$ gets one point from the $k\Delta-1$ copies of vote $\{a_i,b_j\}$ for $i\in [r]$ and $b_j\in B$ and the $k\Delta-1$ copies of vote $\{b_i,d\}$ for $b_i\in B$.
\end{proof}

\begin{lemma}\label{lem:score-of-B-in-P1}
Consider any committee $B$ with $t=k$. If $G$ has no independent set of size $k$, then $\score_f(B,P_1)\leq k\Delta-1$.
\end{lemma}

\begin{proof}
Let $S$ be the set of vertices in $G'$ to which the alternatives in $B$ correspond. Then, $\score_f(B,P_1)$ is equal to the number of edges in $G'$ that are incident to the vertices of $S$. These vertices have degree either $1$ or $\Delta$. If one of them has degree $1$, then $\score_f(B,P_1)\leq (k-1)\Delta+1\leq k\Delta-1$. Otherwise, if all of them have degree $\Delta$ in $G'$, then they correspond to vertices of $G$. Since $G$ has no independent set of size $k$, at least two vertices of $S$ are connected by an edge in $G$ and, consequently, in $G'$. Hence, the number of edges incident to the vertices of $S$ and, consequently, $\score_f(B,P_1)$ is at most $k\Delta-1$.
\end{proof}

By Claim~\ref{claim:score-of-B-in-P2} and Lemma~\ref{lem:score-of-B-in-P1}, we obtain that if $G$ has no independent set of size $k$, then $\score_f(B,P)\leq (k\Delta-1)(kr+k+1)$. Thus, by Claim~\ref{claim:score-of-A}, $A$ is a winning committee in this case. 

Now, assume that $G$ has an independent set of size $k$. This implies that $G'$ has an independent set $S$ of $k$ vertices of degree $\Delta$. Now, consider the committee $B$ consisting of the alternatives that correspond to the vertices of $S$. As the number of edges that are incident to vertices of $S$ is $k\Delta$, we have that $\score_f(B,P_1)=k\Delta$ as well. Then, by Claims~\ref{claim:score-of-A} and~\ref{claim:score-of-B-in-P2}, we have $\score_f(B,P)=1+(k\Delta-1)(kr+k+1)>\score_f(A,P)$ indicating that $A$ is not winning. The proof of correctness of our reduction is now complete.
\end{proof}

By restricting the profile to have only part 1 and a simplified variant of part 2, our reduction yields $\mathsf{coW}[1]$-hardness of winner verification for the CC rule as stated in Theorem~\ref{thm:byproduct_result-1} (see Appendix~\ref{sec:app} for the detailed statement and proof).

\section{Sequential Thiele Rules}\label{sec:seq-Thiele}
We now turn our attention to the PAC learnability of sequential Thiele rules and the complexity of the related elementary learning task. We repeat the formal definition of \SeqThiele\ here.
\begin{defn}[\SeqThiele]\label{defn:seq_thiele}
Given a profile of approval votes $P=\{\sigma_i\}_{i\in \mathcal{N}}$ over the set $\Sigma$ of $m$ alternatives and a $k$-sized subset $C$ of $\Sigma$, decide whether there exists a non-trivial rule $f$ from $\mathcal{F}^k_{\text{seq}}$, so that $C$ is a winning committee in profile $P$ according to $f$.
\end{defn}
In Section~\ref{sec:learnability}, we saw how the sign of a single linear function can be used to compare the score of two committees in a profile according to an ABCS rule. Due to the different definition of sequential Thiele rules, such a direct comparison is not possible. Still, deciding whether a committee is winning can be done by examining the signs of a {\em block} of linear functions. This will be our main tool to show that \SeqThiele\ is in FPT and that the class $\mathcal{F}^k_{\text{seq}}$ is PAC-learnable.

Assume an arbitrary ordering of the alternatives in $\Sigma$. For a committee $A$ and integer $i\in [k]$, we denote by $A(i)$ the $i$-th alternative of committee $A$ (according to the assumed ordering). For a committee $A$, permutation $\pi:[k]\rightarrow [k]$, and integer $i\in [k]$, the notation $A[\pi,i]$ is used to denote the set of alternatives $\cup_{j=1}^i{\{A(\pi(j))}\}$.

Now, assume that the sequential Thiele rule $s$ returns committee $A$ as winning when applied on profile $P$. Assume that the order in which rule $s$ decides the alternatives in $A$ as winning is given by permutation $\pi$:~at step $i$, the rule includes alternative $A(\pi(i))$ in the winning committee. By the definition of the sequential Thiele rule $s$, 
this decision can be expressed by the set of inequalities
\begin{align}\label{eq:block}
\score_s(A[\pi,i],P)-\score_s(A[\pi,i-1]\cup\{a\},P)\geq 0,  
\end{align}
for every alternative $a\in \Sigma\setminus A[\pi,i]$. Non-negativity is necessary and sufficient so that alternative $A(\pi(i))$ is (weakly) preferred for inclusion in the winning committee at step $i$ over any alternative $a\in \Sigma\setminus A[\pi,i]$.

For a sequential Thiele rule $s$, committee $A$, and permutation $\pi$, we define the block $B^P_{A,\pi}(s)$ consisting of the LHS expression of equation (\ref{eq:block}) 
for every $i\in [k]$ and every alternative $a\in \Sigma\setminus A[\pi,i]$. By the discussion above, committee $A$ is winning in profile $P$ under rule $s$  if and only if there is a permutation $\pi$ so that all expressions in block $B^P_{A,\pi}(s)$ are non-negative. Otherwise, if the block $B^P_{A,\pi}(s)$ contains a negative expression for every permutation $\pi$, committee $A$ is not winning.

We can use this observation to show that \SeqThiele\ can be solved in time $k!\cdot\text{poly}(m,n)$ and, hence, is fixed-parameter tractable. This can be done as follows. For each of the $k!$ permutations $\pi$, consider the linear program that has parameters $s(1)$, ..., $s(k)$ as variables (assuming $s(0)=0$) and its constraints require that each expression of block $B^P_{A,\pi}(s)$ ---each of which is a linear function of the variables--- is non-negative and, furthermore, $0\leq s(1) \leq ... \leq s(k)$ and $s(k) \geq 1$ to ensure non-negativity, monotonicity, and non-triviality. If the linear program is feasible for some permutation $\pi$, then the corresponding scoring function $s$ gives a sequential Thiele rule that makes $A$ a winning committee in profile $P$. Otherwise, no such rule exists. Checking feasibility can be done in polynomial time using well-known algorithms for linear programming. The next statement summarizes this discussion.

\begin{thm}\label{thm:fpt}
\SeqThiele\ parameterized by the committee size $k$ is in FPT.
\end{thm}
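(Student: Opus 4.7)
The plan is to reduce \SeqThiele{} to $k!$ instances of linear programming, one for each possible order in which the greedy procedure might include the candidates of $C$, and exploit the fact that LP feasibility is decidable in polynomial time.

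First I would fix a permutation $\pi:[k]\to[k]$ representing a candidate order of inclusion. By the block characterization preceding the theorem, $C$ is a winning committee in $P$ under some sequential Thiele rule $s$ via exactly this greedy order if and only if every expression in the block $B^P_{C,\pi}(s)$ is non-negative. So my algorithm iterates over all $k!$ permutations $\pi$ and, for each, checks whether scoring parameters $s(1),\ldots,s(k)$ exist that make the corresponding block non-negative.

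Next I would construct the LP itself. Since $\score_s(A,P)=\sum_{v\in\mathcal{N}} s(|A\cap \sigma_v|)$ can be rewritten as $\sum_{j=0}^{k} N_j(A,P)\cdot s(j)$, where $N_j(A,P)$ counts the voters whose ballot meets $A$ in exactly $j$ alternatives, each expression in $B^P_{C,\pi}(s)$ is a linear function of the variables $s(1),\ldots,s(k)$ (with $s(0)=0$ fixed). The LP thus has the constraints: monotonicity $s(i-1)\leq s(i)$ for $i\in[k]$; non-triviality $s(k)\geq 1$; and, for every $i\in[k]$ and every $a\in\Sigma\setminus C[\pi,i]$, the non-negativity of the $(i,a)$-th expression of $B^P_{C,\pi}(s)$. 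The total number of constraints is $\mathcal{O}(km + k)$ and each coefficient can be computed in time $\mathcal{O}(kn)$ by scanning the profile.

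A small justification I would include is that the constraint $s(k)\geq 1$ is without loss of generality: any non-trivial monotone non-negative $s$ must satisfy $s(k)>0$, and rescaling $s$ by a positive constant does not affect which candidate is chosen at any greedy step. The algorithm returns YES iff at least one of the $k!$ LPs is feasible, and the total running time is $k!\cdot\mathrm{poly}(m,n)$ using any polynomial-time LP solver, establishing membership in FPT parameterized by $k$. I do not foresee a genuine obstacle here; the only minor subtlety is confirming that the definition of ``winning under a sequential Thiele rule'' precisely matches existential quantification over permutations $\pi$, which is built into the definition in Section~\ref{sec:prelim} and justifies the enumeration.
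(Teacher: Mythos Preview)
Your proposal is correct and follows essentially the same approach as the paper: enumerate all $k!$ permutations of $C$, and for each build a linear program in the variables $s(1),\ldots,s(k)$ (with $s(0)=0$) whose constraints enforce non-negativity of the block $B^P_{C,\pi}(s)$ together with monotonicity and the normalization $s(k)\geq 1$, returning YES iff some LP is feasible. The paper's argument is identical in structure and detail, so there is nothing further to compare.
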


\subsection{Sample complexity bounds}
By adapting our analysis in Section~\ref{sec:learnability} and using blocks of linear functions to witness winning committees as discussed above, we can prove Theorem~\ref{thm:seq-thiele-pac-learnable}. The sample complexity of sequential Thiele rules is polynomial, too. Indeed, it is asymptotically equivalent to the sample complexity of ABCS rules, after replacing $|\mathcal{X}_{m,k}|$ with $k+1$. 

\begin{thm}\label{thm:seq-thiele-pac-learnable}
The class $\mathcal{F}^k_{\text{seq}}$ of sequential Thiele rules with $m$ alternatives and committee size $k$ is PAC-learnable with sample complexity $s(\delta,\epsilon)$ such that
$$s(\delta,\epsilon) \in \Omega\left(\epsilon^{-1} \left(k+\ln{(1/\delta)}\right)\right),$$
and 
$$s(\delta,\epsilon) \in \mathcal{O}\left(\epsilon^{-1} \left(k^2\cdot \ln{m}\cdot \ln{(1/\epsilon)}+\ln{(1/\delta)}\right)\right).$$
\end{thm}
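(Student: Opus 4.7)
The plan is to invoke Theorem~\ref{thm:sample-vs-graph_Natarajan} exactly as in the proof of Theorem~\ref{thm:H-PAC-learnable}: establish an upper bound $D_G(\mathcal{F}^k_{\text{seq}}) \in \mathcal{O}(k^2 \ln m)$ and a lower bound $D_N(\mathcal{F}^k_{\text{seq}}) \in \Omega(k)$, and let the two inequalities of that theorem yield the claimed sample complexity bounds. Both bounds will rely on the block characterization introduced just above, namely that committee $A$ is winning in $P$ under a sequential Thiele rule $s$ iff there exists a permutation $\pi$ making every entry of $B^P_{A,\pi}(s)$ non-negative, and that each entry is a linear function of the $k$ variables $s(1),\ldots,s(k)$.

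For the upper bound on $D_G$, I would mimic Lemma~\ref{lem:non_sequential_upper_bound_graph_dim}, but with the family of linear forms consisting of all block entries rather than a single score difference. Assume a set of $N$ profiles $\{P_j\}_{j\in [N]}$ is G-shattered with witness $g$. For any two subsets $S\neq S'\subseteq [N]$, the rules $h_S$ and $h_{S'}$ must disagree on the winning committee set of some $P_{j^*}$; since the winning set is fully determined by the sign pattern of $\{B^{P_j}_{A,\pi}(\cdot)\}_{j,A,\pi}$, the $2^N$ rules $\{h_S\}_{S\subseteq [N]}$ must realize $2^N$ distinct sign patterns on this family. The total number of linear forms is bounded by $K \le N\cdot\binom{m}{k}\cdot k!\cdot k(m-k)$, which, crudely, is $N\cdot m^{\mathcal{O}(k)}$. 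Applying Theorem~\ref{thm:alon_extension_warren} with $\tau = 1$ and $\ell = k$ yields $2^N \le \left(\frac{8eK}{k}\right)^k$, and manipulating this inequality as in Lemma~\ref{lem:non_sequential_upper_bound_graph_dim} (absorbing the $\log N$ term into $N/2$) gives $N \in \mathcal{O}(k^2\ln m)$.

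For the lower bound on $D_N$, I would construct $\Omega(k)$ profiles N-shattered by $\mathcal{F}^k_{\text{seq}}$, in the spirit of Lemma~\ref{lem:non_sequential_lower_bound_generalized_dim}. Working with alternatives $\{a, b_1,\ldots, b_{k-1}, c, d_1,\ldots, d_{m-k-1}\}$ and the two candidate winning committees $A=\{a, b_1,\ldots, b_{k-1}\}$ and $C=\{b_1,\ldots, b_{k-1}, c\}$, I would design, for each $x$ in a $\Theta(k)$-sized subset of $\{2,\ldots,k\}$, a profile $P_x$ whose approval votes force the greedy procedure to commit to $b_1,\ldots,b_{k-1}$ in its first $k-1$ steps regardless of $f$, and whose final step chooses between $a$ and $c$ depending solely on the sign of the marginal $f(x)-f(x-1)$ relative to a fixed threshold. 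The family $\{h_S\}_{S}$ obtained by setting each marginal $f(x)-f(x-1)$ to a ``small'' or a ``large'' value according to whether $x\in S$ then witnesses N-shattering, with $g_1$ (all marginals large) selecting $A$ everywhere and $g_2$ (all marginals small) selecting $C$ everywhere; monotonicity of $f$ is preserved since marginals are chosen non-negative independently.

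The main obstacle will be this lower-bound construction: unlike in the non-sequential case where every parameter $f(x,y)$ is genuinely independent, here the greedy dynamics couple the parameters, and one must design $P_x$ so that (i) only $A$ and $C$ are plausible greedy outputs, analogously to Lemma~\ref{lem:non_sequential_lower_bound_A_beats_X}, (ii) the first $k-1$ greedy choices are forced to be $\{b_1,\ldots,b_{k-1}\}$ regardless of $f$, and (iii) exactly at the $k$-th step the tie-breaking between $a$ and $c$ is controlled by one specific marginal of $f$. Carefully verifying (i)--(iii) while allowing the other marginals to vary freely across $\{h_S\}$ is the delicate part of the argument.
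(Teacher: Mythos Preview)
Your proposal is correct and follows essentially the same route as the paper: the upper bound on $D_G$ via sign patterns of the block linear forms $B^{P_j}_{A,\pi}(s)$ combined with Theorem~\ref{thm:alon_extension_warren}, and the lower bound on $D_N$ via profiles that force the $b$-alternatives into the committee during the first $k-1$ greedy steps and then pivot between $A$ and $C$ at step $k$ according to a single marginal $f(x)-f(x-1)$. The paper carries this out with $m=k+1$ (no dummy alternatives) and marginals set to $0$ or $2$ against the threshold $1$ contributed by a singleton vote $\{a\}$; your sketch has the roles of ``large'' and ``small'' marginals swapped relative to the paper's construction, but that is purely cosmetic and depends on which of $a,c$ you place in the size-$x$ vote.
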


Again, the proof of Theorem \ref{thm:seq-thiele-pac-learnable} relies on Theorem \ref{thm:sample-vs-graph_Natarajan} and follows after proving an upper bound on quantity $D_G(\mathcal{F}^k_{\text{seq}})$ (Lemma~\ref{lem:graph-dim-seq}) and a lower bound on quantity $D_N(\mathcal{F}^k_{\text{seq}})$ (Lemma~\ref{lem:gen-dim-lower-seq}). 

\begin{lemma}\label{lem:graph-dim-seq}
$D_G(\mathcal{F}^k_{\text{seq}}) \in \mathcal{O}(k^2 \ln{m})$.
\end{lemma}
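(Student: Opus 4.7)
The strategy is to adapt the sign-pattern argument of Lemma~\ref{lem:non_sequential_upper_bound_graph_dim} to the sequential setting, with the block characterization derived at the start of Section~\ref{sec:seq-Thiele} playing the role that single pairwise score-difference polynomials played for ABCS rules. Suppose $D_G(\mathcal{F}^k_{\text{seq}}) = N$: by Definition~\ref{defn:graph_natarajan_dimension} there exist $N$ profiles $\{P_j\}_{j \in [N]}$ and an anchor rule $g \in \mathcal{F}^k_{\text{seq}}$ such that for every subset $S \subseteq [N]$ some rule $h_S \in \mathcal{F}^k_{\text{seq}}$ agrees with $g$ on $\{P_j\}_{j \in S}$ and disagrees elsewhere. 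The plan is to show that each $h_S$ must correspond to a distinct sign pattern on a suitably chosen family of linear functions and then to apply Theorem~\ref{thm:alon_extension_warren}.

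For each $j \in [N]$, each committee $C$ of size $k$, each permutation $\pi$ of $[k]$, each $i \in [k]$, and each $a \in \Sigma \setminus C[\pi,i]$, introduce
$$L^j_{C,\pi,i,a}(s) = \score_s\bigl(C[\pi,i], P_j\bigr) - \score_s\bigl(C[\pi,i-1] \cup \{a\}, P_j\bigr),$$
which, after fixing $s(0) = 0$, is linear in the $k$ parameters $s(1), \dots, s(k)$. By the block observation preceding Theorem~\ref{thm:fpt}, committee $C$ is winning in profile $P_j$ under rule $s$ if and only if some permutation $\pi$ makes every entry of the block $B^{P_j}_{C,\pi}(s)$ non-negative; consequently, the set $s(P_j)$ of winning committees is entirely determined by the sign pattern that $s$ induces on the collection $\mathcal{L}_j = \{L^j_{C,\pi,i,a}\}_{C,\pi,i,a}$, and hence the whole tuple $(s(P_1), \dots, s(P_N))$ is determined by the sign pattern on $\mathcal{L} = \bigcup_{j \in [N]} \mathcal{L}_j$.

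Exactly as in Lemma~\ref{lem:non_sequential_upper_bound_graph_dim}, any two distinct subsets $S, S' \subseteq [N]$ must produce rules $h_S$ and $h_{S'}$ that disagree on the winning committees of some profile $P_{j^*}$ (one equals $g(P_{j^*})$, the other does not), and so they must realize distinct sign patterns on $\mathcal{L}$. This yields $2^N$ distinct sign patterns on $|\mathcal{L}| \leq N \cdot \binom{m}{k}\cdot k! \cdot km \leq N k m^{k+1}$ polynomials of degree $1$ in $k$ variables. Applying Theorem~\ref{thm:alon_extension_warren} with $\tau = 1$, $\ell = k$, and $K = N k m^{k+1}$ gives
$$2^N \leq \left(\frac{8e\, N k m^{k+1}}{k}\right)^{k} = \bigl(8e\, N m^{k+1}\bigr)^{k}.$$
Taking $k$-th roots and dividing by $N$ produces $2^{N/k}/N \leq 8e\, m^{k+1}$; under the harmless assumption $N \geq 4k$, the inequality $2^{z/2} \geq z$ applied to $z = N/k$ turns this into $2^{N/(2k)}/k \leq 8e\, m^{k+1}$, and taking logarithms yields $N \leq 2k \log_2(8e\, k\, m^{k+1}) = \mathcal{O}(k^2 \ln m)$. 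The main obstacle (and the only real departure from the ABCS argument) is the middle paragraph: the block characterization is what converts the disjunctive ``there exists $\pi$'' condition defining sequential-Thiele winners into a pure sign-pattern condition, so that a single-profile disagreement between $h_S$ and $h_{S'}$ propagates to a distinction of sign patterns on $\mathcal{L}$.
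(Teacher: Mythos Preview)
Your proof is correct and follows essentially the same approach as the paper's: define the block linear functions $L^j_{C,\pi,i,a}$, argue that distinct shattering rules $h_S$ realize distinct sign patterns on $\mathcal{L}$, and apply Theorem~\ref{thm:alon_extension_warren} with the same parameter count. Your middle step is in fact slightly slicker than the paper's---you observe directly that the sign pattern on $\mathcal{L}_j$ determines the entire winning set $s(P_j)$, whereas the paper carries out an explicit case analysis on a pair of committees $C,D$ witnessing the disagreement---but this is a stylistic streamlining rather than a different route.
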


\begin{proof}
We follow a similar approach to the proof of Lemma~\ref{lem:non_sequential_upper_bound_graph_dim}. Assume that the graph dimension of $\mathcal{F}^k_{\text{seq}}$ is $N$, with $N\geq 4k$. Thus, there are $N$ different profiles $\{P_j\}_{j\in [N]}$ and a sequential Thiele rule $g\in \mathcal{F}^k_{\text{seq}}$ such that for any subset $S \subseteq [N]$ there exists a rule $h_S \in \mathcal{F}^k_{\text{seq}}$ with the property that
\begin{itemize}
\item $h_S(P_j) = g(P_j)$ for all $j \in S$, and
\item $h_S(P_j) \neq g(P_j)$ for all $j \in [N]\setminus S$.
\end{itemize}
Again, let $\mathcal{C}^k$ be the set of all $k$-sized committees. We now argue that the set $\mathcal{L}$ of linear functions defined by the blocks $B^{P_j}_{C,\pi}(s)$ for every committee $C \in \mathcal{C}^k$, every permutation $\pi:[k]\rightarrow [k]$, and every $j\in [N]$ realize at least $2^N$ distinct sign patterns. 

Now, consider two different subsets $S$ and $S'$ of $[N]$ such that $S\not\subseteq S'$. Let $j^*$ be such that $j^*\in S\setminus S'$. Since $\mathcal{F}^k_{\text{seq}}$ G-shatters the set of profiles $\{P_j\}_{j \in [N]}$ by our assumption, there exist two rules $h_S, h_{S'} \in \mathcal{F}^k_{\text{seq}}$ such that $h_S(P_{j^*}) = g(P_{j^*})$ and $h_{S'}(P_{j^*}) \neq g(P_{j^*})$. Then, there must be a pair of committees $C,D \in \mathcal{C}^k$ such that $C \in h_{S}(P_{j^*}), D \in h_{S'}(P_{j^*})$ and either $C \notin h_{S'}(P_{j^*})$ or $D \notin h_{S}(P_{j^*})$. Thus, at least one of the following cases must happen:
\begin{itemize}
\item $C\not\in h_{S'}(P_{j^*})$. Then, for every permutation $\pi$, one of the linear functions of block $B^{P_{j^*}}_{C,\pi}(s)$ is negative for $s = h_{S'}$. In contrast, by our assumption that $C \in h_{S}(P_{j^*})$, we know that there is a permutation $\pi$ so that all linear functions of block $B^{P_{j^*}}_{C,\pi}(s)$ are non-negative for $s = h_{S}$. Hence, setting $s=h_S$ and $s=h_{S'}$ yields different sign patterns to the set of linear functions $\mathcal{L}$.
\item $D\not\in h_{S}(P_{j^*})$. Then, for every permutation $\pi$, one of the linear functions of block $B^{P_{j^*}}_{D,\pi}(s)$ is negative for $s=h_{S}$. In contrast, by our assumption that $D\in h_{S'}(P_{j^*})$, there is a permutation $\pi$ so that all linear functions of block $B^{P_{j^*}}_{D,\pi}(s)$ are non-negative for $s=h_{S'}$. Again, setting $s=h_S$ and $s=h_{S'}$ yields different sign patterns to $\mathcal{L}$.
\end{itemize} 

We now apply Theorem~\ref{thm:alon_extension_warren} to $\mathcal{L}$ with $\tau=1$ and $\ell=k$.  To bound the number $K$ of linear functions, observe that $\mathcal{L}$ contains for each of the $N$ profiles $P_1, \ldots, P_N$, every committee $C \in \mathcal{C}^k$, and each permutation $\pi$ a block with at most $k\cdot m$ functions. Since $|\mathcal{C}^k|={m \choose k}$,  we have $K\leq N\cdot {m\choose k}\cdot k!\cdot k \cdot m\leq N \cdot m^{k+1}\cdot k$. Theorem~\ref{thm:alon_extension_warren} then gives us an upper bound of $(8eNm^{k+1})^k$ different sign patterns; this must be at least as high as the number of distinct voting rules defined by the $2^N$ possible selections of a set $S\subseteq [N]$. We obtain that
\begin{align*}
2^{N/k}-8eNm^{k+1} \leq 0,
\end{align*}
which implies
\begin{align*}
\frac{k}{N}\cdot 2^{N/k}\leq 8em^{k+2}.
\end{align*}
We now apply the inequality $2^{z/2}\geq z$ for $z\geq 4$. Using $z=N/k$,  combined with our assumption that $N\geq 4k$, we get
$2^{\frac{1}{2} \cdot N/k}\leq 8em^{k+2}$,
which yields $N\leq 2k\log{\left(8em^{k+2}\right)}$, as desired.
\end{proof}

\begin{lemma}\label{lem:gen-dim-lower-seq}
$D_N(\mathcal{F}^k_{\text{seq}})\in\Omega(k)$.
\end{lemma}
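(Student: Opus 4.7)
The plan is to mirror the construction of Lemma~\ref{lem:non_sequential_lower_bound_generalized_dim} but adapted to the greedy nature of sequential Thiele rules. I will construct a set of $\Omega(k)$ profiles $T = \{P_i\}_{i \in I}$ with $|I|\in \Omega(k)$ and exhibit an $N$-shattering of $T$ by the class $\mathcal{F}^k_{\text{seq}}$.

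Each profile $P_i$ is designed so that, when the sequential Thiele rule with scoring function $f$ is run on $P_i$, the greedy procedure reaches a single ``critical'' round in which two distinguished alternatives (call them $a_i^+$ and $a_i^-$) compete for inclusion. The profile consists of two types of voters: (i)~``forcing'' voters, present with large multiplicity and with carefully chosen approval sets, that ensure a fixed prefix of $i-1$ alternatives is picked in the first $i-1$ rounds and that $k-i-1$ additional alternatives are picked after the critical round, irrespective of the scoring function $f$; and (ii)~a small number of ``testing'' voters whose contribution to the critical round's marginal scores biases the choice between $a_i^+$ and $a_i^-$ by an amount proportional to the local marginal $\Delta f(i) = f(i+1)-f(i)$. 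As a result, $a_i^+$ wins the critical round when $\Delta f(i)$ exceeds a prescribed threshold, and $a_i^-$ wins otherwise, giving two candidate winning committees $A_i \ni a_i^+$ and $B_i \ni a_i^-$.

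For the shattering, I will exhibit scoring functions $g_1,g_2\in\mathcal{F}^k_{\text{seq}}$ corresponding to the extremes $h_T$ and $h_\emptyset$: $g_1$ sets every $\Delta f(i)$ to a common ``large'' value so that every $P_i$ elects $A_i$, while $g_2$ sets every $\Delta f(i)$ to a common ``small'' value so that every $P_i$ elects $B_i$; both are monotone non-decreasing and non-negative, hence valid. For each $S \subseteq T$, the rule $h_S$ is obtained by independently setting $\Delta f(i)$ to the ``large'' value if $i \in S$ and to the ``small'' value otherwise; by construction, $h_S(P_i)=g_1(P_i)$ for $i\in S$ and $h_S(P_i)=g_2(P_i)$ otherwise. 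Since only the relative gaps between consecutive $f(j)$'s matter, the non-negativity and monotonicity constraints impose no obstacle to choosing the marginals independently.

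The main technical obstacle is ensuring that tuning the marginal $\Delta f(i)$ to flip the outcome on $P_i$ does not inadvertently disturb (a)~the non-critical greedy selections within $P_i$ itself, or (b)~the critical greedy selection within any other profile $P_j$ with $j \neq i$. I will address this by choosing the multiplicity of the forcing voters to be a sufficiently large constant relative to the multiplicity of the testing voters, and by restricting the ``large'' and ``small'' values used in the family $\{h_S\}$ to lie in a bounded range; a routine case analysis over the $k$ greedy rounds of each $P_i$ then shows that every non-critical selection is decided by a strict inequality with slack exceeding the largest perturbation any $h_S$ can induce. Combined with Lemma~\ref{lem:T-vs-X}-style arithmetic to count the valid indices $i$ (discarding boundary cases such as $i=0$ and $i=k$), this yields the lower bound $D_N(\mathcal{F}^k_{\text{seq}})\in\Omega(k)$.
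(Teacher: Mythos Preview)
Your plan is sound and follows the same high-level strategy as the paper: dedicate one profile to each marginal $f(x)-f(x-1)$, arrange that the greedy run on that profile hinges on a single comparison governed by that marginal, and then set the marginals independently via $h_S$ to $N$-shatter $k-1$ profiles.

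The paper's implementation is noticeably simpler than what you outline, and the simplification is worth pointing out. Rather than placing the critical round in the middle of the greedy process (which, as you note, forces you to control post-critical selections, use large multiplicities for the forcing voters, and confine the $h_S$ to a bounded range), the paper always places the critical decision at the \emph{last} step $k$. With alternatives $\{b_1,\dots,b_{k-1},a,c\}$, profile $P_x$ consists only of three copies of the singleton $\{b_i\}$ for each $i$, one copy of $\{a\}$, and a single ``testing'' vote $\sigma_5^x=\{b_1,\dots,b_{x-1},c\}$. One checks directly that the $b_i$'s are all selected in the first $k-1$ steps for every monotone $f$, and the final step then picks $a$ (score increase $1$) versus $c$ (score increase $f(x)-f(x-1)$). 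The key trick is that the \emph{round} of the critical comparison is fixed, while the \emph{marginal} it probes is controlled by the size of the testing vote; this decouples the index $x$ from the step number and eliminates all post-critical analysis, large multiplicities, and range restrictions. Your approach works too, but buys generality you do not need at the cost of the ``routine case analysis'' you mention; it also carries a small off-by-one (at round $i$ the subcommittee has $i-1$ members, so the marginal actually seen is $f(i)-f(i-1)$, not $f(i{+}1)-f(i)$), which is harmless but worth cleaning up.
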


\begin{proof}
Let $m=k+1$ and consider the set of alternatives $\Sigma=\{b_1, b_2, ..., b_{k-1}, a, c\}$. We define  $k-1$ profiles and a set $F$ of $k-1$ rules from $\mathcal{F}^k_{\text{seq}}$ which $N$-shatters these profiles.

For $x=2, 3, ..., k$, profile $P_x$ contains: three votes $\sigma_1^{x,i}=\sigma_2^{x,i}=\sigma_3^{x,i}=\{b_i\}$ for $i=1, 2, ..., k-1$, a vote $\sigma^x_4=\{a\}$, and a vote $\sigma^x_5=\{b_1, b_2, ..., b_{x-1},c\}$. Also, define the family of voting rules $F\subseteq \mathcal{F}^k_{\text{seq}}$ which, for every subset $S$ of $\{2, 3, ..., k\}$, contains the rule $h_S$ defined as:
\begin{align*}
h_S(0) &=0\\
h_S(1) &=1\\
h_S(x)-h_S(x-1) &= \begin{cases}
0, \quad\quad \text{if } x\in S\\
2, \quad\quad \text{if } x\in \{2, 3, ..., k\} \setminus S
\end{cases}
\end{align*}
Let us see how rule $S$ works for profile $P_x$. Initially, the winning committee is empty. Adding alternative $b_i$ to the winning committee increases the score by $4$ if $1\leq i< x$ and by $3$ if $x\leq i \leq k-1$. Adding alternatives $a$ or $c$ would increase the score by $1$. Hence, some alternative $b_i$ for $i< x$ is included in the winning committee in the first step. 

Next, we claim that all $b$-alternatives are included in the winning committee in the first $k-1$ steps. Indeed, assume that $j$ $b$-alternatives (with $1\leq j< k-1$) have been included in the winning committee. At step $j+1$, adding another $b$-alternative increases the score by at least $3$. Adding alternative $c$ would increase the score by $h_S(j+1)-h_S(j)$ if $j< x$ and by $h_S(x)-h_S(x-1)$ if $j\geq x$, i.e., by no more than $2$. Adding alternative $a$ would increase the score by just $1$. Hence, some $b$-alternative is included in the winning committee at step $j+1$, too. 

At the final $k$-th step, we are left with alternatives $a$ and $c$. Including alternative $a$ in the committee increases the score by $1$. Including alternative $c$ increases the score by $h_S(x)-h_S(x-1)$, which is either $0$ or $2$. Hence, $A=\{b_1, ..., b_{k-1},a\}$ or $C=\{b_1, ..., b_{k-1},c\}$ is the unique winning committee returned by rule $h_S$ when applied on profile $P_x$, depending on whether  $x$ belongs to $S$ or not.

By Definition~\ref{defn:shattering}, this means that the family $F$ (and, consequently, the family $\mathcal{F}^k_{\text{seq}}$) $N$-shatters the profiles $P_2$, ..., $P_{k}$ (by defining $g_1$ and $g_2$ as $g_1=h_{\{2, 3, ..., k\}}$ and $g_2=h_{\emptyset}$, respectively). By Definition~\ref{defn:graph_natarajan_dimension}, we conclude that $D_N(\mathcal{F}^k_{\text{seq}})\geq k-1$.
\end{proof}

\subsection{$\mathsf{NP}$-hardness of \SeqThiele}\label{sec:complexity_sequential}
The last statement of this section is negative and provides evidence that learning in class $\mathcal{F}^k_{\text{seq}}$ is hard as well. The proof employs a novel reduction from a structured version of \textsc{3SAT}, known to be $\mathsf{NP}$-hard~\citep{Y05}.

\begin{defn}[\textsc{2P2N-3SAT} problem]\label{defn:2P2N-3SAT_problem}
In the \textsc{2P2N-3SAT} problem, we are given a 3-CNF formula $\phi$ with $r$ variables $x_1, \ldots, x_r$ and $t$ clauses $c_1, \ldots, c_t$. Each variable appears in $\phi$ twice as a positive literal and twice as a negative literal. The goal is to decide whether there exists an assignment $\alpha$ of boolean values to the variables so that $\phi$ is satisfied.
\end{defn}

\begin{thm}\label{thm:seq-thiele-np-hard}
\SeqThiele\ is $\mathsf{NP}$-hard.
\end{thm}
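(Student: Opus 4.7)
The plan is to give a polynomial-time reduction from \textsc{2P2N-3SAT}. Given a 3-CNF formula $\phi$ with $r$ variables $x_1, \dots, x_r$ and $t$ clauses $c_1, \dots, c_t$, where each variable occurs exactly twice positively and twice negatively, I build a profile $P$ over a set $\Sigma$ of alternatives together with a committee $C$ of size $k$ such that there exists a non-trivial rule $f \in \mathcal{F}^k_{\text{seq}}$ making $C$ winning in $P$ if and only if $\phi$ is satisfiable.

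The central idea, hinted at in the remark preceding the theorem, is that the order in which a sequential Thiele rule greedily fills $C$ will encode a truth assignment. For each variable $x_i$, I introduce two ``literal'' alternatives $p_i$ and $n_i$ (standing for the positive and negative literals), both belonging to $C$; for each clause $c_j$, I introduce one ``clause'' alternative $w_j$, also in $C$. Thus $k = 2r + t$. Outside $C$, I place auxiliary ``distractor'' alternatives that compete for greedy selection at various points. In the intended witnessing ordering, the rule first inserts exactly one of $\{p_i, n_i\}$ for each $i$, encoding a boolean assignment $\alpha$ (with $\alpha(x_i) = \text{true}$ if and only if $p_i$ is chosen first), then inserts the $t$ clause alternatives $w_j$, and finally inserts the remaining $r$ literals. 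Enforcing this pattern amounts to controlling the marginal-score comparisons that appear in the blocks $B^P_{C,\pi}(s)$ discussed just before Theorem~\ref{thm:fpt}.

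The profile is a union of three kinds of vote groups. A \emph{concavity} group forces any admissible $f$ to satisfy $f(2) - f(1) < f(1)$ and, more generally, to be sufficiently concave on the relevant range, so that in the first $r$ greedy steps the rule strictly prefers to spread across distinct variable pairs rather than complete any pair early. A \emph{clause} group contains, for each clause $c_j$, votes coupling $w_j$ with the three literal alternatives of $c_j$ and with the distractors; the multiplicities are calibrated so that the marginal score of $w_j$ at step $r + j$ exceeds that of every distractor precisely when at least one literal satisfying $c_j$ is already in the partial committee, and is dominated by some distractor otherwise. A final \emph{padding} group guarantees that, once all first-phase literals and all $w_j$'s have been placed, the remaining $r$ literals are uniquely optimal for the rule to complete $C$ before any outside alternative becomes attractive.

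The main obstacle is to calibrate the vote multiplicities so that a \emph{single} monotone, non-negative scoring function $f$ can simultaneously realise every marginal-score inequality required along the intended ordering whenever $\phi$ is satisfiable, while no such $f$ exists if $\phi$ is unsatisfiable. Starting from a satisfying assignment $\alpha$, I would exhibit an explicit $f$ (for instance a suitable concave function with carefully placed increments, close in spirit to the CC scoring function) and verify each block inequality directly. In the converse direction, I would argue that any $f$ witnessing $C$ as winning must drive the greedy process through the three-phase structure above, and that inserting each $w_j$ is feasible only when a literal satisfying $c_j$ has already been selected; translating the block inequalities into a linear program along the lines used for Theorem~\ref{thm:fpt} is likely the cleanest way to make this argument precise.
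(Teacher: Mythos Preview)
Your high-level strategy---reduce from \textsc{2P2N-3SAT}, encode a truth assignment by the order in which the two literal alternatives of each variable enter the committee, and make each clause alternative admissible only when a satisfying literal is already present---matches the paper's approach, and your ``if'' direction (exhibit the sequential CC rule when $\phi$ is satisfiable) is essentially what the paper does as well.

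The gap is in the converse. You must show that \emph{no} non-trivial $f\in\mathcal{F}^k_{\text{seq}}$ makes $C$ winning when $\phi$ is unsatisfiable, and merely forcing $f(2)-f(1)<f(1)$ leaves a continuum of admissible scoring functions. Your clause-group calibration (``the marginal score of $w_j$ exceeds that of every distractor precisely when a satisfying literal is already selected'') depends on the actual values of $f$, so a single set of vote multiplicities cannot make this equivalence hold uniformly over all concave $f$; and your claim that any witnessing $f$ ``must drive the greedy process through the three-phase structure'' is exactly the hard part and is left unsupported. The paper resolves this by adding an outside alternative $z$ together with extra subprofiles that force any witnessing $f$ to satisfy the \emph{equality} $f(1)=f(2)>0$: if $f(1)=0$, or if $f(2)>f(1)>0$, then $z$ is greedily selected before the target committee can be completed. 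Since every vote in the substantive part of the profile has size two, all surviving rules behave identically---like sequential CC---regardless of $f(3),\dots,f(k)$, and a separate claim shows that the remaining large-size votes do not affect the relative ordering among the relevant alternatives. With the rule thus pinned to a single behaviour, the converse reduces to the deterministic fact that when $\phi$ is unsatisfiable, for some clause $c_{j^*}$ all three of its literal alternatives are already in the partial committee, whereupon an outside special alternative $s_{j^*}$ strictly beats $c_{j^*}$. Your proposal is missing this ``pin the rule to an equality'' mechanism, without which the converse direction does not go through.
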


\begin{proof}
We prove the theorem by presenting a polynomial-time reduction from the \textsc{2P2N-3SAT} problem.  Let $\phi$ be the given 3-CNF formula on $r$ variables and $t$ clauses.  In our reduction, we have various types of alternatives. There are eight \textit{padding alternatives} $\{p, w_1, \ldots, w_7\}$. The $2r$ literals derived from the variables of $\phi$ form another set of \textit{literal alternatives} $\{x_1, \bar{x}_1, x_2, \bar{x}_2, \ldots, x_r, \bar{x}_r\}$. The $t$ clauses of $\phi$ are represented by a set of \textit{clause alternatives} $\{c_1, \ldots, c_t\}$. There is a set of \textit{special alternatives} $\{s_1, \ldots, s_t,z\}$. In addition, the set of alternatives $\Sigma$ includes a number of dummy alternatives which we denote by $d$ followed by a subscript. Let $k=2r+t+8$ and
$$A= \{
p, w_1, \ldots, w_7,
x_1,\bar{x}_1, x_2, \bar{x}_2, \ldots, x_r, \bar{x}_r,
c_1, \ldots, c_t\}.$$
Before we define profile $P$, we introduce some notation. For any clause $c$ of $\phi$, let
$$\text{lit}(c) = \{l_1, l_2, l_3\} \subseteq \{x_1, \bar{x}_1, x_2, \bar{x}_2, \ldots, x_r, \bar{x}_r\}$$
be the set of literals appearing in $c$. Let $S$ be a set of alternatives. For some ordering of the alternatives in $S$ and any $i\in[|S|]$, we use $S(i)$ to denote the $i$-th alternative of the set $S$. The profile $P$ then again consists of three parts. 
\begin{itemize}
\item Part 1 is derived from the given 3-CNF formula $\phi$. For every $i \in [r]$, profile $P$ includes three copies of vote $\{x_i, \bar{x}_i\}$ and the votes $\{x_i, d_{x_i}\},\{\bar{x}_i,d_{\bar{x}_i}\}$. For every $i \in [t]$, there are the votes $\{c_i, l_1\}, \{c_i, l_2\}, \{c_i, l_3\}$ where $\{l_1, l_2, l_3\} = \text{lit}(c_i)$, two copies of the vote $\{c_i, s_i\}$, and a vote $\{s_i, d_{s_i}\}$. Figure~\ref{fig:NP_hardness_reduction_seq} shows a graph representation of part 1 of profile $P$.
\item Part 2 consists of the following votes: votes $\{p,z\}$ and $\{p,d_{p,j}\}$ for every $j \in [9]$, and votes $\{w_i,z\}$ and $\{w_i, d_{w_i, j}\}$ for every $i \in [7], j \in [6]$.
\item For part 3, we define $S = A \cup \{s_1, \ldots, s_t\}$ and impose an arbitrary ordering over the alternatives in $S$. Notice that $|S| = k+t$. Then, part 3 consists of the votes $\{S(1), S(2),..., S(i-1), z, S(i+1),..., S(k+t)\}$ for $i \in [k+t]$.
\end{itemize}

\begin{figure*}[t]
\centering
\begin{tikzpicture}
\begin{scope}[scale=0.75,
              every node/.style={circle,draw,thin,
                                    font=\small,
                                    minimum size=18pt,inner sep=0pt, outer sep=0pt},
              every edge/.style={draw,very thin}]
    \begin{scope}[every node/.style={circle,draw,thin,fill=lightgray,
                                    font=\normalsize,
                                    minimum size=18pt,inner sep=0pt, outer sep=0pt}]
        \node (x1)      at (-2,3) {$x_1$};
        \node (x1_bar)  at (-1,3) {$\bar{x}_1$};
        \node (x2)      at (1, 3) {$x_2$};
        \node (x2_bar)  at (2, 3) {$\bar{x}_2$};
        \node (xr)      at (5, 3) {$x_r$};
        \node (xr_bar)  at (6, 3) {$\bar{x}_r$};
        
        \node (c1)  at (10, 3) {$c_1$};
        \node (c2)  at (12, 3) {$c_2$};
        \node (ct)  at (15, 3) {$c_t$};
    \end{scope}
    \begin{scope}[every node/.style={circle,draw,thin,
                                    font=\normalsize,
                                    minimum size=18pt,inner sep=0pt, outer sep=0pt}]
        \node (s1)       at (10, 1)  {$s_1$};
        \node (s2)       at (12, 1)  {$s_2$};
        \node (st)       at (15, 1)  {$s_t$};
    \end{scope}
    \begin{scope}[every node/.style={font=\small}]
        \node at (3.5, 3) {$\ldots$};
        \node at (13.5,3) {$\ldots$};
        \node at (13.5,1) {$\ldots$};
        \node at (13.5,0) {$\ldots$};
    \end{scope}
    
    \begin{scope}[every node/.style={draw,circle,fill,black,
                                minimum size=6pt,inner sep=0pt, outer sep=0pt}]
        \node (y1)      at (-2.50,2) {};
        \node (y1_bar)  at (-0.5,2) {};
        \node (y2)      at (0.5, 2) {};
        \node (y2_bar)  at (2.5, 2) {};
        \node (yr)      at (4.5, 2) {};
        \node (yr_bar)  at (6.5, 2) {};
        
        \node (s1_hat)  at (10, 0)  {};
        \node (s2_hat)  at (12, 0) {};
        \node (st_hat)  at (15, 0) {};
    \end{scope}
    
    \path [-] (x1) edge[looseness=1.1,bend right=50] (x1_bar);
    \path [-] (x1) edge[looseness=1.4,bend right=70] (x1_bar);
    \path [-] (x1) edge[looseness=1.8,bend right=85] (x1_bar);
    \path [-] (x1) edge (y1);
    \path [-] (x1_bar) edge (y1_bar);
    
    \path [-] (x2) edge[looseness=1.1,bend right=50] (x2_bar);
    \path [-] (x2) edge[looseness=1.4,bend right=70] (x2_bar);
    \path [-] (x2) edge[looseness=1.8,bend right=85] (x2_bar);
    \path [-] (x2) edge (y2);
    \path [-] (x2_bar) edge (y2_bar);
    
    \path [-] (xr) edge[looseness=1.1,bend right=50] (xr_bar);
    \path [-] (xr) edge[looseness=1.4,bend right=70] (xr_bar);
    \path [-] (xr) edge[looseness=1.8,bend right=85] (xr_bar);
    \path [-] (xr) edge (yr);
    \path [-] (xr_bar) edge (yr_bar);
    
    \path [-] (c1) edge[bend left=30] (s1);
    \path [-] (c1) edge[bend right=30] (s1);
    
    \path [-] (c2) edge[bend left=30] (s2);
    \path [-] (c2) edge[bend right=30] (s2);
    
    \path [-] (ct) edge[bend left=30] (st);
    \path [-] (ct) edge[bend right=30] (st);
    
    
    \path [-] (s1) edge (s1_hat);
    \path [-] (s2) edge (s2_hat);
    \path [-] (st) edge (st_hat);
    
    \path [-] (x1) edge ($(x1)!1cm!(-1.72,4)$);
    \path [-] (x1) edge[bend left, out=60, in=100] (c1);
    
    \path [-] (x1_bar) edge ($(x1_bar)!1cm!(-.75,4)$);
    \path [-] (x1_bar) edge ($(x1_bar)!1cm!(-.4,4)$);
    
    \path [-] (x2) edge ($(x2)!1cm!(1.25,4)$);
    \path [-] (x2) edge ($(x2)!1cm!(1.6,4)$);
    
    \path [-] (x2_bar) edge ($(x2_bar)!1cm!(2.4,4)$);
    \path [-] (x2_bar)[bend left, out=50, in=117.5] edge (c1);
    
    \path [-] (xr) edge ($(xr)!1cm!(5.25,4)$);
    \path [-] (xr) edge ($(xr)!1cm!(5.6,4)$);
    
    \path [-] (xr_bar) edge ($(xr_bar)!1cm!(6.4,4)$);
    \path [-] (xr_bar)[bend left, out=45, in=135] edge (c1);
    
    \path [-] (c2) edge ($(c2)!1cm!(12,4)$);
    \path [-] (c2) edge ($(c2)!1cm!(11.75,4)$);
    \path [-] (c2) edge ($(c2)!1cm!(11.4,4)$);
    
    \path [-] (ct) edge ($(ct)!1cm!(15,4)$);
    \path [-] (ct) edge ($(ct)!1cm!(14.75,4)$);
    \path [-] (ct) edge ($(ct)!1cm!(14.4,4)$);
\end{scope}
\end{tikzpicture}
\caption{A graph representation of part 1 of profile $P$. The vertices correspond to alternatives  and the edges to votes. Alternatives belonging to committee $A$ are highlighted in gray. In this example, we assume that $\text{lit}(c_1) = \{x_1, \bar{x}_2, \bar{x}_r\}$. The other votes containing a literal alternative and a clause alternative are only indicated as edge stubs. White vertices are the special alternatives. Dummy alternatives are shown as small black vertices without a label.}
\label{fig:NP_hardness_reduction_seq}
\end{figure*}
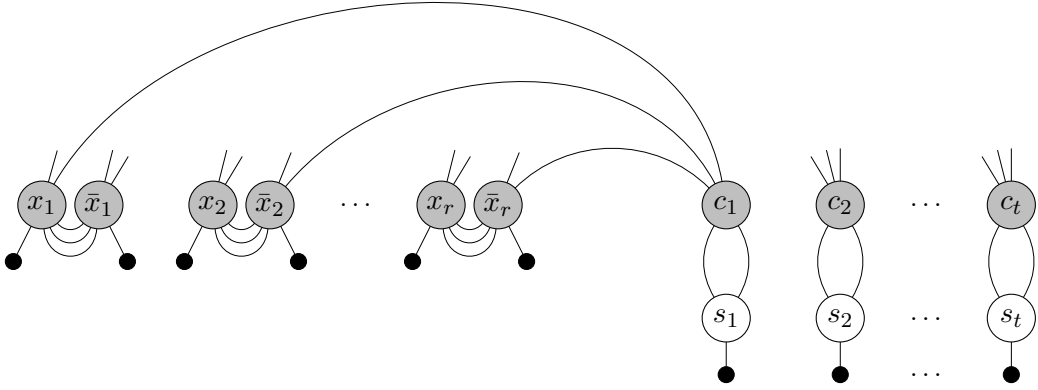

\noindent We refer to the subprofiles of $P$ in part 1, 2, and 3 as $P_1, P_2$, and $P_3$, respectively.

Throughout the proof, for the sequential Thiele rule $f$ applied to profile $P_j$, we use the notation $\Delta_f(C_i,a,P_j)$ to denote the \textit{score increase} when including alternative $a$ in the subcommittee $C_i$, i.e., 
$$\Delta_f(C_i,a,P_j)=\score_f(C_i\cup\{a\},P_j)-\score_f(C_i,P_j).$$ 
In Claims~\ref{claim:seq_p3_does_not_affect_order} and \ref{claim:seq_p3_z_higher_score}, we make two important observations about part 3 of profile $P$.

\begin{claim}\label{claim:seq_p3_does_not_affect_order}
For any rule $f \in \mathcal{F}^k_{\text{seq}}$, let $C_i\subseteq A$ be the set of winning alternatives in step $i \in \{0,\ldots,k-1\}$. For any two alternatives $a,a' \in S\setminus C_i$, it holds that
$$\Delta_f(C_i, a, P_3) = \Delta_f(C_i, a', P_3).$$
\end{claim}
\begin{proof}
For any alternative $a \in S\setminus C_i$, observe that $a$ appears in $k+t-1$ votes of $P_3$. In $i$ of these votes, $a$ appears together with $z$ and $i-1$ alternatives from $C_i$. In the remaining $k+t-1-i$ votes, $a$ appears together with all $i$ alternatives from $C_i$. Recall that $z\notin A$ and therefore $z\notin C_i\subseteq A$. We thus have that
\begin{align}\label{eqn:score_increase_a_in_A}
\Delta_f(C_i, a, P_3)
&= (k+t-1-i) (f(i+1) - f(i)) + i(f(i)-f(i-1)).
\end{align}
Notice that this statement is well-defined also for $i=0$ since the term that contains $f(i-1)=f(-1)$ vanishes in this case. Thus, any two alternatives $a,a' \in S\setminus C_i$ receive the same additional score from subprofile $P_3$.
\end{proof}

\begin{claim}\label{claim:seq_p3_z_higher_score}
For any rule $f \in \mathcal{F}^k_{\text{seq}}$, let $C_i\subseteq A$ be the set of winning alternatives in step $i \in \{0,\ldots,k-1\}$. For any alternative $a \in S\setminus C_i$, it holds that
$$\Delta_f(C_i, z, P_3) \geq \Delta_f(C_i, a, P_3),$$
with strict inequality if $f(i+1) - f(i) > 0$.
\end{claim}
\begin{proof}
Again, recall that $z\notin A$ such that $z\notin C_i\subseteq A$. Alternative $z$ appears in all $k+t$ votes of $P_3$. In $i$ of these votes, $z$ appears together with $i-1$ alternatives from $C_i$. In the remaining $k+t-i$ votes, $z$ appears together with all $i$ alternatives from $C_i$. The claim then follows from the observation that 
\begin{align*}
\Delta_f(C_i, z, P_3)
= (k+t-i) (f(i+1) - f(i)) + i(f(i)-f(i-1))
= \Delta_f(C_i, a, P_3) + f(i+1) - f(i),
\end{align*}
where we used~(\ref{eqn:score_increase_a_in_A}) to obtain the second equality.
\end{proof}

We proceed by narrowing down the choices of parameters that may make $A$ a winning committee in $P$.
\begin{lemma}\label{lem:seq_A_winning_only_if_f1_eq_f2_greater_zero}
For any non-trivial rule $f \in \mathcal{F}^k_{\text{seq}}$, committee $A$ is winning in $P$ only if $f(1) = f(2) > 0$.
\end{lemma}

\begin{proof}
We distinguish between three cases where the condition $f(1) = f(2) > 0$ is not true, in each of which $A$ is not winning. First, assume that $f(1) = f(2) = 0$. Since $f$ is non-trivial, there is some $i \in \{2,\ldots,k-1\}$ such that $f(i+1) - f(i) > 0$. Let $i$ be minimal such that this is the case and let $C_i\subseteq A$ be the winning committee in step $i$. Since $i\geq 2$, subprofiles $P_1, P_2$ do not grant additional score to any alternative $a \in A\setminus C_i$. From Claim~\ref{claim:seq_p3_z_higher_score}, it then follows that alternative $z$ has strictly higher score increase than any alternative $a \in A\setminus C_i$. Thus, alternative $z$ is included in the winning committee such that $A$ is not winning.

Now, assume that $f(1) = 0$ and $f(2) > 0$. Note that the padding alternatives $p, w_1, \ldots, w_7$ belong to committee $A$. Consider the first step $i$ in which an alternative from $\{p,w_1, ..., w_7\}$ is included in the winning subcommittee $C_i\subseteq A$. Since no two alternatives from $\{p,w_1, ..., w_7\}$ belong together in a vote of $P_2$ and since $f(1)=0$, we have
\begin{align*}
\Delta_f(C_i,a,P) &=\Delta_f(C_i,a,P_3)
\end{align*}
for every $a\in \{p,w_1, ..., w_7\}\setminus C_i$. Consider the alternative $z$. We observe that $z$ does not appear in any vote of $P_1$ and shares exactly one vote with $a$ among the votes of $P_2$. Hence,
\begin{align*}
\Delta_f(C_i, z, P)
= \Delta_f(C_i, z, P_2) + \Delta_f(C_i, z, P_3)
&= f(2) + \Delta_f(C_i, z, P_3)\\
&\geq f(2) + \Delta_f(C_i, a, P_3)\\
&= f(2) + \Delta_f(C_i, a, P)\\
&> \Delta_f(C_i, a, P).
\end{align*}
Here, the first inequality is due to Claim~\ref{claim:seq_p3_z_higher_score}, and the second (strict) inequality follows from the assumption that $f(2) > 0$. Hence, alternative $z$ is included in the winning committee before any alternative $a \in  \{p, w_1, \ldots, w_5\}\setminus C_i$.

Lastly, assume that $f(2) > f(1) > 0$. In this case, the winning committee in step $1$ is $C_1 = \{p\}$, since alternative $p$ appears in the most votes from profile $P$. More specifically, $p$ appears in 10 votes of $P_2$ and $k+t-1$ votes of $P_3$ (i.e., $9+k+t$ votes in total) while all other alternatives from $A$ appear in at most 7 votes among $P_1 \cup P_2$ and $k+t-1$ votes of $P_3$ (i.e., at most $6+k+t$ votes in total). Alternative $z$ appears in 8 votes of $P_2$, and $k+t$ votes of $P_3$ (i.e., $8+k+t$ votes in total). By our construction, the remaining alternatives clearly appear in strictly less votes than $p$.

Consider an alternative $a\in A\setminus\{p\}$. Again, observe that, in profiles $P_1$ and $P_2$, alternative $a$ appears in at most 7 votes, and none of these votes contain $p$. Furthermore, alternative $a$ appears in $k+t-1$ votes of $P_3$, $k+t-2$ of which also contain $p$. Hence, 
\begin{align}\label{eqn:singleton_subcommittee_p}
\Delta_f(\{p\},a,P)
&\leq 8 f(1) + (k+t-2)(f(2)-f(1)).
\end{align}
On the other hand, alternative $z$ appears in 8 votes of $P_2$, and in all $k+t$ votes of $P_3$. Among these votes, $p$ belongs to one vote of $P_2$, and to $k+t-1$ votes of $P_3$. Hence,
\begin{align*}
\Delta_f(\{p\},z,P)
&=7f(1) + f(2) - f(1) + f(1) + (k+t-1)(f(2)-f(1))\\
&= 8f(1) + (k+t-2)(f(2)-f(1)) + 2(f(2)-f(1))\\
&\geq \Delta_f(\{p\},a,P) + 2(f(2)-f(1))\\
&> \Delta_f(\{p\},a,P),\\
\end{align*}
where we used~(\ref{eqn:singleton_subcommittee_p}) to obtain the first inequality, and the second (strict) inequality is due to the assumption that $f(2) > f(1) > 0$. Thus, in the second step, alternative $z$ is included in the winning committee before any alternative $a \in A\setminus \{p\}$.
\end{proof}

We complete the proof assuming ---without loss of generality--- that $f(1) = f(2) = 1$. For any such rule, we make a simple, yet important observations which follows from Claim~\ref{claim:seq_p3_does_not_affect_order}.

\begin{obs}\label{obs:order_inclusion_depends_only_on_P1_P2}
Consider any two alternatives $a,a'\in S$ and let rule $f$ be such that $f(1)=f(2)=1$. At any step during the evaluation of $f$ on profile $P$, whether alternative $a$ has larger score increase than alternative $a'$ depends only on the votes of $P_1$ and $P_2$.
\end{obs}

Thus, in order for $A$ to be winning, alternatives $p, w_1, \ldots, w_7$ are included in the winning committee first. Then, for every variable $x_i$ in $\phi$, either alternative $x_i$ or $\bar{x}_i$ (exclusively) is included. After that, any clause alternative $c$ is selected as long as there is at least one literal alternative $l\in\text{lit}(c)$ such that $l$ is not yet included in the winning committee.\footnote{There is an intuitive way of keeping track of the score increase that the alternatives earn from profile $P_1$ using the graph representation in Figure~\ref{fig:NP_hardness_reduction_seq}. For $f(1) = f(2) = 1$, including an alternative in the winning committee is equivalent to removing all edges incident with this alternative's node from the graph. At any step, the score increase that an alternative receives from $P_1$ is the number of edges incident with this alternative's node in the respective step.}

\begin{lemma}\label{lem:seq_z_before_some_c_if_not_satisfyable}
If there is no assignment to the variables $x_1, \ldots, x_r$ satisfying $\phi$, then there is some $j^* \in [t]$ such that alternative $s_{j^*}$ is included in the winning committee before alternative $c_{j^*}$.
\end{lemma}
\begin{proof}
Let $\widetilde{A}$ be the winning committee after the inclusion of alternatives $p, w_1,\ldots,w_5$ and either $x_i$ or $\bar{x}_i$ (exclusively) for every $i \in [r]$. Since there is no assignment to the variables that satisfies $\phi$, there must now be a clause alternative $c_{j^*}$ for which all literal alternatives corresponding to literals in $\text{lit}(c_{j^*})$ are already included in the winning committee. Otherwise, the ordering in which the $r$ literal alternatives have been included in the winning committee would correspond to an assignment $\alpha$ that satisfies $\phi$. That is, we could pick $\alpha$ such that
$$\alpha(l) =
\begin{cases}
  	0 &\text{if } l \in \widetilde{A},\\
  	1 &\text{otherwise,}
  \end{cases}$$
for every $l \in \{x_1, \bar{x}_1, \ldots, x_r, \bar{x}_r\}$.

In order for $A$ to be winning on profile $P$, $c_{j^*}$ needs to included in the winning subcommittee in some step. However, notice that for any $C_i\subseteq A$, it holds that
$$\Delta_f(C_i,s_{j^*},P_1) = 3 > 2 = \Delta_f(C_i,c_{j^*},P_1).$$
Outside of $P_1$, the alternatives $c_{j^*},s_{j^*}$ appear only in votes of $P_3$. But, from Claim~\ref{claim:seq_p3_does_not_affect_order}, it follows that $\Delta_f(C_i,c_{j^*},P_3) = \Delta_f(C_i,s_{j^*},P_3)$. Hence, $\Delta_f(C_i,s_{j^*},P) > \Delta_f(C_i,c_{j^*},P)$ such that $s_{j^*}$ is included in the winning committee before $c_{j^*}$.
\end{proof}

So far, we have shown that committee $A$ is not winning on profile $P$ under any non-trivial rule $f \in \mathcal{F}^k_{\text{seq}}$ if there exists no assignment that satisfies $\phi$. Now, assume that there is an assignment $\alpha$ to the variables $x_1, \ldots, x_r$ that satisfies $\phi$. We rename the literals according to their values under the assignment $\alpha$ such that
$$l^1_i = \begin{cases}
       	x_i & \text{if } \alpha(x_i) = 1\\
       	\bar{x}_i & \text{otherwise,}
       \end{cases}$$
and
$$l^0_i = \begin{cases}
       	x_i & \text{if } \alpha(x_i) = 0\\
       	\bar{x}_i & \text{otherwise.}
       \end{cases}$$
Let $f$ be such that $f(1) = f(2) = \ldots = f(k) = 1$. We conclude the proof by showing that there is an ordering of the alternatives in $A$ such that the sequential Thiele rule specified by $f$ returns $A$ as the winning committee. Using Observation~\ref{obs:order_inclusion_depends_only_on_P1_P2} and, with respect to the score increase of alternative $z$, also Claim~\ref{claim:seq_p3_z_higher_score}, the feasibility of the following sequence of inclusions can easily be verified.

First, the alternatives $p, w_1, \ldots, w_7$ and the literal alternatives $l^0_1, \ldots, l^0_r$ are added to the committee. Conversely, none of the alternatives $l^1_1, \ldots, l^1_r$ have been included in the winning committee yet. Since $\alpha$ is a satisfying assignment, each clause alternative $c_i$ for $i \in [t]$ is included in at least one approval vote $\{l,c_i\}$ where $l \in \{l^1_1, \ldots, l^1_r\}$. Thus, every clause alternative $c_i$ has a score increase of at least $3$ which is at least as high as the score increase of the corresponding special alternative $s_i$ and of any remaining literal alternative. This implies that there is an ordering of the clause alternatives such that each of these alternatives is included in the winning committee. At this point, all the remaining alternatives have a score increase of at most $1$. All alternatives from $A$ that have not yet been selected for the winning committee have a score increase of exactly $1$. Hence, the alternatives $l^1_1, \ldots, l^1_r$ can be included in the winning committee as well. Thereby, $A$ is indeed winning under rule $f$ on profile $P$.
\end{proof}

We remark that by considering only part 1 of profile $P$, our reduction yields the $\mathsf{NP}$-hardness of winner verification for the sequential CC rule (see Appendix~\ref{sec:app} for a detailed statement and proof).

\section{Concluding Remarks}\label{sec:conclusion}
We studied complexity aspects of learning ABCS and sequential Thiele rules. In a nutshell, our results suggest that learning from these classes is feasible in the PAC learning framework but ---in a worst-case sense--- only in computational inefficient ways. We believe that our techniques for assessing PAC learnability can be extended to other rules. \citet{FSST19} define a hierarchy of classes of ranking-based multiwinner voting rules that are specified using scoring functions. These are natural candidates for extending our analysis. We also remark that, en route to proving hardness of \ABCS\ and \SeqThiele, parts of our reductions show hardness of winner verification for the CC and the sequential CC rule.  Formal statements and proofs of these two byproduct results appear in Appendix~\ref{sec:app}.

\paragraph{Acknowledgements} We thank Piotr Faliszewski for his insights on the complexity of sequential Thiele rules and for pointers to the literature.  We also thank Allan Gr{\o}nlund for helpful discussions on the multiclass fundamental theorem, Chris Schwiegelshohn for discussion on the graph dimension, Krzysztof Sornat for comments on an earlier version of this paper, and an anonymous reviewer for spotting a subtle error in an earlier version of the proof of Theorem~\ref{thm:seq-thiele-np-hard}.

\bibliography{learnmvr.arxiv}

\appendix

\section{Hardness of Winner Verification for the CC Rule}\label{sec:app}
We introduced the decision problems \ABCS\ and \SeqThiele\ in Section~\ref{sec:prelim}. A more restricted variant of these problems is known under the term {\em winner verification}. In addition to the inputs defined for our problems, that is, a profile $P$ and a $k$-sized committee $C$, we are now also given a specific rule $f$ from $\mathcal{F}^{m,k}$ or $\mathcal{F}^k_{\mathit{seq}}$. The goal of the winner verification problem is to decide whether committee $C$ is winning on profile $P$ under the given rule $f$. Our hardness proofs for \ABCS\ and \SeqThiele\ produce as a byproduct corresponding hardness proofs for winner verification under the CC rule, and its sequential counterpart, the sequential CC rule. Recall that we introduced the CC rule in Section~\ref{sec:prelim} as an example of a Thiele rule, a rule whose scoring function does not depend on the $y$ parameter and can therefore be treated as univariate. The CC rule is specified by the univariate scoring function $\ccrule$ where $\ccrule(0) = 0$, and $\ccrule(x) = 1$ for $x>0$. The sequential CC rule is the sequential Thiele rule that uses $\ccrule$ as its scoring function.

In the following, we show $\mathsf{coW}[1]$-hardness of winner verification for the CC rule (Theorem~\ref{thm:byproduct_result-1}) and thereby strengthen a recent result by~\citet{SDM20}. We then proceed to show $\mathsf{NP}$-hardness of winner verification for the sequential CC rule  (Theorem~\ref{thm:byproduct_result-2}). Both proofs rely on simplifications to our constructions in the hardness proofs of \ABCS\ (Theorem~\ref{thm:ABCS_coW1_hard}) and \SeqThiele\ (Theorem~\ref{thm:seq-thiele-np-hard}), respectively. 

\begin{thm}\label{thm:byproduct_result-1}
Winner verification for the CC rule parameterized by the committee size $k$ is $\mathsf{coW}[1]$-hard. 
\end{thm}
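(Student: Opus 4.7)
The plan is to adapt the reduction of Theorem~\ref{thm:ABCS_coW1_hard} to the setting where the scoring rule is fixed to $\ccrule$. Because the rule is no longer a variable of the problem, much of the machinery in the original construction --- introduced solely to force any ABCS rule making $A$ winning to behave like the CC rule on size-$2$ votes --- becomes unnecessary. I would reduce from \textsc{IndependentSet}, keep the regularization of $G$ into the $\Delta$-regular graph $G'$ via degree-$1$ dummies exactly as before, and use the same alternative set and committee $A=\{a_1,\ldots,a_k\}$. The profile then consists of Part~1 verbatim (one vote $\{b_i,b_j\}$ per edge of $G'$) together with a simplified variant of Part~2 that keeps the $(k\Delta-1)$-fold copies of the size-$2$ votes $\{a_i,b_j\}$, $\{a_i,c\}$, $\{b_i,d\}$, and $\{a_1,d\}$, while dropping $\{c,d\}$ (which $A$ does not intersect and which is not needed to maintain the key score bookkeeping) and discarding Part~3 entirely.

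The key observation I would invoke is that every vote in the retained profile has size exactly $2$. Hence for every such vote, $\ccrule$ assigns the same score as the specific rule $f$ with $f(1,2)=f(2,2)=1$ and $f(x,y)=0$ elsewhere used in the original proof. In the original proof, Part~3 contributed $0$ to every committee under this $f$, so deleting it does not affect any of the score computations there. Thus Claim~\ref{claim:score-of-A}, Lemma~\ref{lem:score-of-B}, Claim~\ref{claim:score-of-B-in-P2}, and Lemma~\ref{lem:score-of-B-in-P1} transfer essentially verbatim, yielding $\score_{\ccrule}(A,P)=(k\Delta-1)(kr+k+1)$, $\score_{\ccrule}(B,P)\leq\score_{\ccrule}(A,P)$ for every committee $B$ containing fewer than $k$ $b$-alternatives, and $\score_{\ccrule}(B,P)>\score_{\ccrule}(A,P)$ for a committee $B$ consisting of $k$ $b$-alternatives precisely when the corresponding vertex set is an independent set of $k$ degree-$\Delta$ vertices in $G'$ --- equivalently, an independent set of size $k$ in $G$.

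Putting these together, $A$ is winning under $\ccrule$ on the constructed profile if and only if $G$ has no independent set of size $K=k$. Since \textsc{IndependentSet} is $\mathsf{W}[1]$-hard parameterized by the solution size, winner verification for the CC rule is $\mathsf{coW}[1]$-hard parameterized by the committee size. The main technical point to check carefully is that the simplification of Part~2 (in particular, the removal of $\{c,d\}$ and of Part~3) preserves the exact algebraic balance so that the $t<k$ case analysis of Lemma~\ref{lem:score-of-B} still delivers $\score_{\ccrule}(B,P)\leq\score_{\ccrule}(A,P)$, and so that the margin between $A$'s score and that of any $t=k$ competing committee depends only on $|E(S_B)|$ through Claim~\ref{claim:score-of-B-in-P2} and Lemma~\ref{lem:score-of-B-in-P1}; this is a matter of redoing the per-vote-type counting under the retained votes, but there is no new combinatorial idea involved.
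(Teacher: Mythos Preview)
Your proposal is correct and follows essentially the same route as the paper: reduce from \textsc{IndependentSet}, keep Part~1 verbatim, drop Part~3, and simplify Part~2, then reuse the score analysis of Claim~\ref{claim:score-of-A}, Lemma~\ref{lem:score-of-B}, Claim~\ref{claim:score-of-B-in-P2}, and Lemma~\ref{lem:score-of-B-in-P1} under $\ccrule$. The paper simplifies Part~2 even more aggressively---retaining only the votes $\{a_i,b_j\}$ for $i,j\in[r]$ and a single extra vote $\{a_1,c\}$ (each in $k\Delta-1$ copies), and dropping alternative $d$ altogether---which yields the cleaner target score $(k\Delta-1)(kr+1)$; your less minimal variant still works because removing $\{c,d\}$ can only decrease competitors' scores while leaving $\score_{\ccrule}(A,P)$ unchanged.
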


\begin{proof}
We reduce from \textsc{IndependentSet} (Definition~\ref{defn:independent_set}), proceeding along the same lines as the proof of Theorem~\ref{thm:ABCS_coW1_hard}. We set $k = K$ and define the graph $G' = (V', E')$ in the same way as before. In particular, every node $v \in V'$ has degree $\Delta \geq 2$. Without loss of generality, we can assume that $V'$ is the set of positive integers in $[r]$. The set of alternatives $\Sigma$ consists of alternatives $a_i$ and $b_i$ for every vertex $i\in V'$, and an additional alternative $c$. Let $A=\{a_1, a_2, ..., a_k\}$. The profile $P$ consists of two parts: 
\begin{itemize}
\item Part 1 consists of a vote $\{b_i, b_j\}$ for every edge $(i,j)\in E'$. 
\item Part 2 consists of $k\Delta-1$ copies of each of the following votes: vote $\{a_i,b_j\}$ for every $i,j\in [r]$, and a vote $\{a_1, c\}$.
\end{itemize}
We use $P_1$, and $P_2$ to denote the two subprofiles of votes in parts 1 and 2, respectively.

Our first claim gives the score of committee $A$ in profile $P$ under the CC rule. 
\begin{claim}\label{claim:score-of-A_hardness_CC}
It holds that $\score_{\ccrule}(A,P) = (k\Delta - 1)(kr + 1)$.
\end{claim}
\begin{proof}
Committee $A$ gets one point from the $k\Delta-1$ copies of vote $\{a_i,b_j\}$ for $i\in [k]$ and $j\in [r]$, and from the $k\Delta-1$ copies of vote $\{a_1,c\}$.
\end{proof}
Now, consider a committee $B$ and let $t$ be the number of its alternatives from $\{b_1, ..., b_r\}$, and $\lambda$, and $\nu$ be binary variables indicating whether alternative $c$, and $a_1$ belong to $B$, respectively.
\begin{claim}\label{claim:score-of-B_hardness_CC}
If $t<k$, $\score_{\ccrule}(B,P) \leq (k\Delta-1)(kr+1)$.
\end{claim}
\begin{proof}
First, observe that $B$ gets at most $\Delta$ points for each of the $t < k$ alternatives from set $\{b_1, ..., b_r\}$ that $B$ contains. Hence, 
\begin{align}\label{eq:score-of-B-in-P1_hardness_CC}
\score_{\ccrule}(B,P_1) &\leq t\Delta \leq (k\Delta-1)\one\{t>0\}.
\end{align}
To bound $\score_{\ccrule}(B,P_2)$, observe that $B$ gets a point for each of the $k\Delta-1$ copies of
\begin{itemize}
\item vote $\{a_i,b_j\}$ for $i,j\in [r]$ with either $a_i\in B$ or (not exclusively) $b_j\in B$. The total number of these votes is $(k\Delta-1)(tr+(k-t-\lambda)r-t(k-t-\lambda))$. 
\item vote $\{a_1,c\}$ if at least one of $a_1$ and $c$ belong to $B$. The number of these votes is $(k\Delta-1)(\lambda+\nu-\lambda\nu)$.
\end{itemize}
Summing these numbers of votes, we get
\begin{align}\label{eq:score-of-B-in-P2_hardness_CC}
\score_{\ccrule}(B,P_2) & \leq (k\Delta-1)(kr -\lambda (r-1) - t(k-t-\lambda) + (1-\lambda)\nu).
\end{align}
Combining inequalities~(\ref{eq:score-of-B-in-P1_hardness_CC}) and~(\ref{eq:score-of-B-in-P2_hardness_CC}), we obtain
\begin{align}\label{eq:score-of-B_hardness_CC}
\score_{\ccrule}(B,P) & \leq (k\Delta-1)(\one\{t>0\}+kr -\lambda (r-1) - t(k-t-\lambda) + (1-\lambda)\nu).
\end{align}
We now distinguish between three cases:
\begin{itemize}
\item If $k-t-\lambda=0$, then, since $k \geq 2$ and $t < k$, it must hold that $ t = k-1 > 0$ implying $\one\{t>0\}$ equals 1, and $\lambda = 1$.
\item If $k-t-\lambda > 0$ and $t=0$, then $\one\{t>0\}=0$.
\item Lastly, if $k-t-\lambda > 0$ and $t>0$, we get $\one\{t>0\}=1$.
\end{itemize}
The claim for $\score_{\ccrule}(B,P)$ trivially follows from inequality~(\ref{eq:score-of-B_hardness_CC}) for all the three cases above.
\end{proof}
By Claims~\ref{claim:score-of-A_hardness_CC} and~\ref{claim:score-of-B_hardness_CC}, if committee $B$ has score higher than $\score_{\ccrule}(A,P)$, then it must be that $t=k$. We conclude the proof by reasoning about $\score_{\ccrule}(B,P)$ in this case.
\begin{claim}\label{claim:score-of-B-in-P2_hardness_CC}
Let $B$ be a committee with $t=k$. Then, $\score_{\ccrule}(B,P_2)=(k\Delta-1)kr$.
\end{claim}
\begin{proof}
In part 2 of the profile, committee $B$ gets one point from the $k\Delta-1$ copies of vote $\{a_i,b_j\}$ for $i\in [r]$ and $b_j\in B$.
\end{proof}
Note that we defined subprofile $P_1$ in exactly the same way as we did for the proof of Theorem~\ref{thm:ABCS_coW1_hard}. We also have that $\ccrule(1) = \ccrule(2) = 1$ by definition. Thus, Lemma~\ref{lem:score-of-B-in-P1} is also applicable in the context of this proof. We state the lemma here again as a corollary.
\begin{cor}\label{cor:score-of-B-in-P1_hardness_CC}
Consider any committee $B$ with $t=k$. If $G$ has no independent set of size $k$, then $\score_{\ccrule}(B,P_1)\leq k\Delta-1$.
\end{cor}
By Claim~\ref{claim:score-of-B-in-P2_hardness_CC} and Corollary~\ref{cor:score-of-B-in-P1_hardness_CC}, we obtain that if $G$ has no independent set of size $k$, then $\score_{\ccrule}(B,P)\leq (k\Delta-1)(kr+1)$. Thus, by Claim~\ref{claim:score-of-A_hardness_CC}, $A$ is a winning committee in this case. 

Now, assume that $G$ has an independent set of size $k$. This implies that $G'$ has an independent set $S$ of $k$ vertices of degree $\Delta$. Now, consider the committee $B$ consisting of the alternatives that correspond to the vertices of $S$. As the number of edges that are incident to vertices of $S$ is $k\Delta$,  we have that $\score_{\ccrule}(B,P_1)=k\Delta$ as well. Then, by Claims~\ref{claim:score-of-A_hardness_CC} and~\ref{claim:score-of-B-in-P2_hardness_CC}, we have $\score_{\ccrule}(B,P)=1+(k\Delta-1)(kr+1)>\score_{\ccrule}(A,P)$, indicating that $A$ is not winning. The proof of correctness of our reduction is now complete.
\end{proof}

\begin{thm}\label{thm:byproduct_result-2}
Winner verification for the sequential CC rule is $\mathsf{NP}$-hard.
\end{thm}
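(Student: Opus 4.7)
The plan is to specialize the reduction in the proof of Theorem~\ref{thm:seq-thiele-np-hard}. Since the rule is now fixed, the portions of the original profile whose role was to pin down an unknown scoring function can be discarded: I would drop subprofiles $P_2$ and $P_3$ as well as the padding alternatives $\{p,w_1,\ldots,w_7\}$, set $k=2r+t$, take $A=\{x_1,\bar{x}_1,\ldots,x_r,\bar{x}_r,c_1,\ldots,c_t\}$, and use as profile $P$ only subprofile $P_1$ from the original construction.

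The first key step is to show that, under the sequential CC rule, any valid ordering producing $A$ must pick one literal alternative per variable in its first $r$ steps. This holds because, initially, literal alternatives have score $6$, clause alternatives have score $5$, specials have score $3$, and dummies have score at most $1$; moreover, picking a literal affects only the scores of its complement, its dummy, and the clauses containing it, so the literals of every untouched variable keep score $6$ throughout these first $r$ steps and strictly dominate every other alternative's score. These $r$ forced selections induce a boolean assignment $\alpha$, following the $l^0/l^1$ convention from the proof of Theorem~\ref{thm:seq-thiele-np-hard}.

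For the negative direction, if $\phi$ is unsatisfiable then some clause $c_j$ consists entirely of $l^0$ literals, so after step $r$ its three literal votes are covered and its score drops to $2$ (only its two $\{c_j,s_j\}$ votes remain). Since $s_j, d_{s_j}\notin A$, both must stay outside the committee throughout, and hence $s_j$ keeps score $3$ for as long as $c_j$ is outside. At the step when $c_j$ would be picked, $s_j$ strictly exceeds $c_j$, contradicting maximality and ruling out the inclusion of $c_j$ in the committee.

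For the positive direction, assuming a satisfying $\alpha$, I would construct an ordering in three phases: first include $l^0_1,\ldots,l^0_r$ (each a maximizer of score $6$); then include all $t$ clause alternatives (each has score at least $3$, matching or exceeding every remaining non-clause score, with ties broken in favor of clauses); and finally include $l^1_1,\ldots,l^1_r$ (each of score $1$, tied with the remaining specials and dummies, broken in favor of $l^1$ literals). The main obstacle is verifying the second phase, for which one must check that each unpicked clause retains score at least $3$ until it is selected. This holds since picking a clause $c'$ only decreases the scores of $s_{c'}$ and of the $l^1$ literals appearing in $c'$, leaving the scores of the remaining clauses untouched.
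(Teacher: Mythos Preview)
Your proposal is correct and follows essentially the same approach as the paper: you strip the construction from Theorem~\ref{thm:seq-thiele-np-hard} down to subprofile $P_1$, drop the padding alternatives, set $k=2r+t$, and argue the two directions via the same mechanism (first $r$ rounds force one literal per variable; an unsatisfied clause $c_j$ is then permanently blocked by $s_j$; a satisfying assignment yields a feasible ordering). If anything, you provide slightly more justification than the paper does---your explicit score computations ($6,5,3,1$) and your remark that picking a clause does not affect the scores of the remaining clauses make the argument tighter.
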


\begin{proof}
We reduce from \textsc{2P2N-3SAT} (Definition~\ref{defn:2P2N-3SAT_problem}). Let $\phi$ be the given 3-CNF formula on $r$ variables and $t$ clauses. In our reduction, we define different types of alternatives based on $\phi$. The $2r$ literals derived from the variables of $\phi$ form a set of \textit{literal alternatives} $\{x_1, \bar{x}_1, x_2, \bar{x}_2, \ldots, x_r, \bar{x}_r\}$. The $t$ clauses of $\phi$ are represented by a set of \textit{clause alternatives} $\{c_1, \ldots, c_t\}$. There is a set of \textit{special alternatives} $\{s_1, \ldots, s_t\}$. In addition, for the literal alternatives and the special candidates, there exists a corresponding set of \textit{dummy alternatives} $\{d_{x_1},d_{\bar{x}_1}, \ldots, d_{x_r},d_{\bar{x}_r},d_{s_1}, \ldots, d_{s_t}\}$. Let
$$A= \{
x_1,\bar{x}_1, x_2, \bar{x}_2, \ldots, x_r, \bar{x}_r,
c_1, \ldots, c_t\}.$$
Recall that, for any clause $c$ of $\phi$,
$$\text{lit}(c) = \{l_1, l_2, l_3\} \subseteq \{x_1, \bar{x}_1, x_2, \bar{x}_2, \ldots, x_r, \bar{x}_r\}$$
is the set of literals appearing in $c$. The profile $P$ is then defined in the exact same way as subprofile $P_1$ in the proof of Theorem~\ref{thm:seq-thiele-np-hard}. For every $i \in [r]$, profile $P$ includes three copies of vote $\{x_i, \bar{x}_i\}$ and the votes $\{x_i, d_{x_i}\}$ and $\{\bar{x}_i,d_{\bar{x}_i}\}$. For every $i \in [t]$, there are the votes $\{c_i, l_1\}, \{c_i, l_2\}, \{c_i, l_3\}$ where $\{l_1, l_2, l_3\} = \text{lit}(c_i)$, two copies of the vote $\{c_i, s_i\}$, and a vote $\{s_i, d_{s_i}\}$. The reader can inspect Figure~\ref{fig:NP_hardness_reduction_seq} for a graph representation of the profile.

\begin{claim}\label{claim:seq_CC_dummy_before_c_if_not_satisfyable}
If there is no assignment to the variables $x_1, \ldots, x_r$ satisfying $\phi$, then there is some $j^* \in [t]$ such that alternative $s_{j^*}$ is included in the winning committee before alternative $c_{j^*}$.
\end{claim}
\begin{proof}
Observe that under the sequential CC rule, for every $i \in [r]$, either $x_i$ or $\bar{x}_i$ (exclusively) is included in the winning committee in the first $r$ rounds.
Let $C_r$ be the winning committee after these rounds. Since there is no assignment to the variables that satisfies $\phi$, there must now be a clause alternative $c_{j^*}$ for which all literal alternatives corresponding to literals in $\text{lit}(c_{j^*})$ are already included in the winning committee. Otherwise, the ordering in which the $r$ literal alternatives have been included in the winning committee would correspond to an assignment $\alpha$ that satisfies $\phi$. That is, we could pick $\alpha$ such that
$$\alpha(l) =
\begin{cases}
  	0 &\text{if } l \in C_r,\\
  	1 &\text{otherwise,}
  \end{cases}$$
for every $l \in \{x_1, \bar{x}_1, \ldots, x_r, \bar{x}_r\}$.

In order for $A$ to become the winning committee, there must be a step $i$ where $c_{j^*}$ is included in the winning committee. However, observe that
$$\Delta_{\ccrule}(C_i,s_{j^*},P) = 3 > 2 = \Delta_{\ccrule}(C_i,c_{j^*},P),$$
where $\Delta_{\ccrule}(C_i,a,P)$ denotes the \textit{score increase} when including alternative $a$ in the subcommittee $C_i$ on a profile $P$. Thus, alternative $s_{j^*}$ is included in the winning committee before the alternative $c_{j^*}$.
\end{proof}

So far, we have shown that committee $A$ is not winning on profile $P$ under the sequential CC rule if there exists no assignment that satisfies $\phi$. Now, assume that there is an assignment $\alpha$ to the variables $x_1, \ldots, x_r$ that satisfies $\phi$. We rename the literals according to their values under the assignment $\alpha$ such that
$$l^1_i = \begin{cases}
       	x_i & \text{if } \alpha(x_i) = 1\\
       	\bar{x}_i & \text{otherwise,}
       \end{cases}$$
and
$$l^0_i = \begin{cases}
       	x_i & \text{if } \alpha(x_i) = 0\\
       	\bar{x}_i & \text{otherwise.}
       \end{cases}$$
We conclude the proof by showing that there is an ordering of the alternatives in $A$ such that the sequential CC rule returns $A$ as the winning committee. The 	feasibility of the following sequence of inclusions can easily be verified.

First, the literal alternatives $l^0_1, \ldots, l^0_r$ are added to the committee. Conversely, none of the alternatives $l^1_1, \ldots, l^1_r$ have been included in the winning committee yet. Since $\alpha$ is a satisfying assignment, each clause alternative $c_i$ for $i \in [t]$ is included in at least one vote $\{l,c_i\}$ where $l \in \{l^1_1, \ldots, l^1_r\}$. Thus, every clause alternative $c_i$ has a score increase of at least $3$. This score increase is at least as high as the score increase of the corresponding special alternative $s_i$ and of any remaining literal alternative. This implies that there is an ordering of the clause alternatives such that each of these alternatives is included in the winning committee. At this point, all the remaining alternatives have a score increase of at most $1$. All alternatives from $A$ that have not yet been selected for the winning committee have a score increase of exactly $1$. Hence, the alternatives $l^1_1, \ldots, l^1_r$ can be included in the winning committee as well. Thereby, $A$ is indeed winning under rule $f$ on profile $P$.
\end{proof}

\end{document}